\newtheorem{theorem}{Theorem}
\newtheorem{definition}{Definition}
\newtheorem{lemma}{Lemma}
\DeclareMathOperator{\range}{range}
\let\vec\mathbf
\DeclareMathOperator{\data}{\mathfrak{D}}
\newcommand{\mystrut}{\rule[-.3\baselineskip]{0pt}{\baselineskip}}
\title{An Uncertainty Principle is a Price of Privacy-Preserving Microdata}
\author{%
  John Abowd\\
  U.S. Census Bureau\\
  and Cornell University\\
  \And Robert Ashmead\\
  U.S. Census Bureau\\
  \And Ryan Cumings-Menon\\
   U.S. Census Bureau\\
   \\
 \And Simson Garfinkel\\
  (formerly) U.S. Census Bureau\\
  U.S. Department of Homeland Security\\
  and George Washington University
  \And Daniel Kifer\\
  U.S. Census Bureau\\
  and Penn State University\\
  \\
  \And Philip Leclerc\\
  U.S. Census Bureau\\
  \\
  \\
  \And William Sexton\\
  (formerly) U.S. Census Bureau\\
  and Tumult Labs\\
  \And Ashley Simpson\\
  Knexus\\
  \\
  \And Christine Task\\
  Knexus\\
  \\
  \And Pavel Zhuravlev\\
  U.S. Census Bureau\\
  \\
}
\begin{document}

\maketitle

\begin{abstract}
Privacy-protected microdata are often the desired output of a differentially private algorithm since  microdata is familiar and convenient for downstream users. However, there is a statistical price for this kind of convenience. We show that an uncertainty principle governs the trade-off between accuracy for a population of interest (``sum query'') vs. accuracy for its component sub-populations (``point queries''). Compared to differentially private query answering systems that are not required to produce microdata, accuracy can degrade by a logarithmic factor. For example, in the case of pure differential privacy, without the microdata requirement, one can provide noisy answers to the sum query and all point queries while guaranteeing that each answer has squared error $O(1/\epsilon^2)$. With the microdata requirement, one must choose between allowing an additional $\log^2(d)$ factor ($d$ is the number of point queries) for some point queries or allowing an extra $O(d^2)$ factor for the sum query. We present lower bounds for pure, approximate, and concentrated differential privacy. We propose mitigation strategies and create a collection of benchmark datasets that can be used for public study of this problem.

\end{abstract}

\section{Introduction}\label{sec:intro}
Differential Privacy \cite{dwork06Calibrating} is a mathematical theory of information leakage that allows organizations to publish noisy statistics about their datasets while protecting the confidentiality of user information. 
 Its state-of-the-art guarantees have resulted in adoption by data collectors such as the U.S. Census Bureau ~\cite{ashwin08:map,onthemap,Haney:2017:UCF,abowd18kdd}, Google~\cite{rappor,prochlo}, Apple~\cite{applediffp},
Microsoft~\cite{DingKY17}, Uber~\cite{elasticsensitivity}, and Facebook
\cite{fburl}.

In many cases, downstream users want the output of disclosure avoidance systems in the form of microdata (a set of records about individuals). For example, this has historically been the case for tabulations of Census Bureau data, and is currently a requirement for most 2020 Census of Population and Housing tabulations\cite{issues}. However, an end-user study of demonstration data products released by an early prototype of the Census Bureau's disclosure avoidance system showed significant anomalies in the privacy-protected microdata \cite{cnstatworkshop11}.\footnote{Throughout this paper we use \emph{privacy-protected} and \emph{privacy-preserving} synonymously. The Census Bureau prefers ``privacy-protected,'' whereas the scientific literature has more often used ``privacy-preserving.'' Both terms mean that the confidentiality of individual responses has been protected using differentially private algorithms.} They noted the following: the system first produced differentially private noisy query answers, called \emph{measurements}, and then synthesized privacy-protected microdata so that query answers computed from the privacy-protected microdata matched the noisy measurements as closely as possible (based on some objective function). However, after the privacy-protected microdata were created, they compared  (1) the original measurement query noisy answers and (2) the values of the same queries computed from the privacy-protected microdata. They noted that in some cases, the query error from the privacy-protected microdata was ``much larger'' than the measurement query error \cite{cnstatworkshop11}.

In this paper, we show that such  anomalies are an inherent and unavoidable consequence of privacy-protected microdata (they affect all differentially private algorithms that must output microdata). We further show that the additional errors caused by privacy-protected microdata also satisfy a \emph{new} uncertainty principle that trades off error between accuracy on populations and accuracy on sub-populations. We next explain this principle.

First, our criterion is \emph{per-query} expected squared error. That is, if $Q$ is a collection of queries, $\data$ is the true data, and $\widetilde{\data}$ is the privacy-protected microdata, we are interested in the left side of Equation \ref{eqn:themetrics} (below), where the expectation is taken over the randomness of the algorithm that ingests $\data$ and outputs privacy protected $\widetilde{\data}$. 

\begin{align}
\underbrace{\max_{q\in Q} E_{\widetilde{\data}}[(q(\data)-q(\widetilde{\data}))^2]}_{\text{Our focus: per-query error}} 
 \leq 
\underbrace{E_{\widetilde{\data}}[\max_{q\in Q}(q(\data)-q(\widetilde{\data}))^2]}_{\text{Most other papers: simultaneous/outlier error.}}
\label{eqn:themetrics}.
\end{align} 
This metric measures whether there exist ``bad'' queries that have systematically large errors \emph{on average}. It is \emph{not} to be confused with simultaneous/outlier noise error (right side of Equation \ref{eqn:themetrics}) that is the focus of most theoretical papers on differential privacy, such as \cite{BLR08}. The reason is that simultaneous error cannot distinguish between systematic error in specific queries vs. outliers that result by chance when dealing with many random variables. On the other hand  per query-error can make this distinction because it considers the average behavior of each query separately.

Next, consider a collection of $d$ disjoint\footnote{That is, adding/removing a record into the data can only affect the answer to \underline{\textbf{one}} of the queries.} counting queries  $q_1,\dots, q_d$ and a special query $q_*$ that is equal to their sum ($q_*(\data)=\sum_i q_i(\data)$). We call $q_1,\dots, q_d$ the \emph{point queries} and $q_*$ the \emph{sum} query. Examples include (1) $q_*(\data)=$ ``\# of Black or African Americans in the data living in California'' and $q_i(\data)=$ ``\# of Black or African Americans in the data living \underline{in county $i$} in California'' and (2) $q_*(\data)=$ ``population of a given county'' (which can be  used in federal and state-level funding allocations) and $q_i(\data)=$ ``population in census block $i$ in that county'' (useful for redistricting). Thus, for different use-cases, accuracies at these local and aggregate scales are important.

It is well-known that queries $q_1,\dots, q_d, q_*$ can be answered using $\epsilon$-differential privacy by adding Laplace$(2/\epsilon)$ noise to each query \cite{diffpbook}, thus guaranteeing that each query answer has expected squared error $8/\epsilon^2$. However, in this paper, we show that it is not possible to guarantee this kind of error if one is required to produce differentially private microdata $\widetilde{\data}$ and answer queries using it (i.e., computing $q_1(\widetilde{\data}),\dots, q_d(\widetilde{\data}), q_*(\widetilde{\data})$). Specifically, suppose an $\epsilon$-differentially private microdata-producing algorithm can guarantee that, for all datasets $\data$, $E_{\widetilde{\data}}[(q_*(\data)-q_*(\widetilde{\data}))^2]\leq D^2$ and $\max_i E_{\widetilde{\data}}[(q_i(\data) - q_i(\widetilde{\data}))^2]\leq C^2$ for some constants $C$ and $D$. Then one has to choose:
\begin{itemize}
\item If $D^2 \in O(1/\epsilon^2)$ then $C^2\in\Omega(\frac{1}{\epsilon^2}\log^2(d))$. That is, making the sum query accurate may force us to take a $\log^2(d)$ penalty in the expected squared error some of the point queries, or
\item If $C^2 \in O(1/\epsilon^2)$ then $D^2\in \Omega(\frac{d^2}{\epsilon^2})$. That is, a low per-query error guarantee for point queries may increase expected squared error of the sum query  by a factor of $d^2$.
\end{itemize}

We present such lower bound results for pure differential privacy \cite{dwork06Calibrating}, approximate differential privacy \cite{ourdata}, and concentrated differential privacy \cite{zcdp}, with nearly matching upper bounds.

We note that this uncertainty principle affects some, but not all, possible datasets. That is, there are datasets for which the error penalties do not exist. Thus, the goal in practical privacy-protected microdata generation should be to minimize the occurrence of this uncertainty principle (since eliminating it entirely is impossible). To this end, we propose a  benchmark suite of real and synthetic datasets that can be used by the wider community for further study of this problem. We also propose some algorithms, inspired by our lower and upper bound proofs, for mitigating the effects of this uncertainty principle. Limitations: empirically, these algorithms perform well on the benchmarks but we do not have theoretical proofs of performance.

\section{Preliminaries}\label{sec:prelim}
Let $\data$ denote a dataset, $M$ a differentially private algorithm, and let $\widetilde{\data}$ be a privacy-preserving dataset (e.g., $M(\data)=\widetilde{\data}$). A counting query $q$ is associated with a predicate $\psi$, and the query answer $q(\data)$ is the number of records in $\data$ that satisfy $\psi$. We let $q_1,\dots, q_d$ represent a set of $d$ counting queries whose corresponding predicates $\psi_1,\dots,\psi_d$ are \textbf{disjoint} (no record can satisfy more than one of the predicates). We also let $q_*$ denote their sum: $q_*(\data)=\sum_{i=1}^d q_i(\data)$.

\subsection{Differential Privacy}\label{subsec:dp}

Differential privacy is currently considered the gold standard in privacy protections.
It relies on the concept of neighboring datasets, defined as follows.
\begin{definition}[Neighbors]
Two datasets $\data_1$ and $\data_2$ are neighbors, denoted by $\data_1\sim \data_2$, if $\data_1$ can be obtained from $\data_2$ by adding or removing one record.
\end{definition}
Using this concept of neighbors, differential privacy ensures that adding or removing one record from a dataset has little effect on the probabilistic outcomes of an algorithm: 
\begin{definition}[Differential Privacy \protect\cite{dwork06Calibrating}]
Given privacy parameters $\epsilon>0$ and $\delta\geq 0$, a randomized algorithm M satisfies $(\epsilon,\delta)$-DP if for all pairs of datasets $\data_1, \data_2$ that are neighbors of each other, and for all $S\subseteq\range(M)$, the following equation holds:
\begin{align*}
    P(M(\data_1)\in S) \leq e^{\epsilon}P(M(\data_2)\in S)+\delta,
\end{align*}
where the probability is only over the randomness in $M$ (not the randomness in the data).
When $\delta=0$, we say that $M$ satisfies \emph{pure differential privacy} (also known as $\epsilon$-differential privacy or $\epsilon$-DP) and when $\delta>0$ we say that $M$ satisfies \emph{approximate differential privacy}.
\end{definition}

Another important version of differential privacy, is 
$\rho$-zCDP (concentrated differential privacy):
\begin{definition}[zCDP \protect\cite{zcdp}]
Given a privacy parameter $\rho$, a randomized algorithm $M$ satisfies $\rho$-zCDP if for all pairs of datasets $\data_1, \data_2$ that are neighbors of each other and all numbers $\alpha > 1$,
\begin{align*}
    \mathcal{D}_\alpha(M(\data_1)||M(\data_2)) &\leq \rho\alpha
\end{align*}
where $\mathcal{D}_\alpha(P || Q) \equiv  \frac{1}{\alpha-1}\log\left(E_{x\sim P}\left[\frac{P(x)^{\alpha-1}}{Q(x)^{\alpha-1}}\right]\right)$ is the Renyi divergence of order $\alpha$ between probability distributions $P$ and $Q$. 
\end{definition}
Although zCDP is difficult to interpret, there are useful results that help provide intuition. First, any $M$ that satisfies $\epsilon$-differential privacy also satisfies $\rho$-zCDP with $\rho=\frac{\epsilon^2}{2}$ \cite{zcdp}. In general a $\rho$-zCDP algorithm does not satisfy pure differential privacy but does satisfy $(\epsilon, \delta)$-DP for infinitely many pairs of $\epsilon$ and $\delta$ that lie along a curve (see \cite{dgm} and \cite{alcks20} for conversions between $\rho$-zCDP and $(\epsilon,\delta)$-DP).

\subsection{Algorithm Design with Differential Privacy}
A few basic principles underlie the construction of many algorithms for differential privacy.
The first is sensitivity, which measures the maximum impact that one record can have on a set of queries (regardless of input data):
\begin{definition}[Sensitivity \protect\cite{dwork06Calibrating}]
The $L_p$ global sensitivity of a set $Q$ of queries, denoted by $\Delta_p(Q)$, is defined as $\sup\limits_{\data_1\sim \data_2} \left(\sum_{q\in Q}|q(\data_1)-q(\data_2)|^p\right)^{1/p}$. 
\end{definition}

Global sensitivity can be used with the Laplace and Gaussian distributions to form basic mechanisms. Let Lap$(\alpha)$ represent a draw from the Laplace distribution with density $f(x)=\frac{1}{2\alpha}e^{-|x|/\alpha}$ and $N(0,\sigma^2)$ represent the zero-mean Gaussian distribution with variance $\sigma^2$. Each appearance of $Lap(\alpha)$ or $N(0,\sigma^2)$ represents an independent sample from the corresponding distribution. 

\begin{theorem}[Laplace Mechanism \protect\cite{dwork06Calibrating}]
Given a privacy parameter $\epsilon>0$, a set $Q$ of queries, and an input dataset $\data$, the mechanism $M$ that returns the set of noisy answers $\{q(\data)+Lap(\Delta_1(Q)/\epsilon)\}_{q\in Q}$ satisfies $\epsilon$-differential privacy.
\end{theorem}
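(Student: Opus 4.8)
The plan is to prove pure $\epsilon$-DP directly from the definition by bounding the ratio of output densities pointwise, which is the cleanest route when $\delta = 0$. Write $Q = \{q_1, \dots, q_k\}$, so that on input $\data$ the mechanism returns the random vector $M(\data) = (q_1(\data) + Z_1, \dots, q_k(\data) + Z_k)$, where the $Z_j$ are independent $\mathrm{Lap}(\Delta_1(Q)/\epsilon)$ draws. Because the noise terms are independent across queries, the joint density of $M(\data)$ at a point $\vec{y} = (y_1, \dots, y_k)$ factors as a product of one-dimensional Laplace densities:
\begin{align*}
p_{\data}(\vec{y}) = \prod_{j=1}^{k} \frac{\epsilon}{2\Delta_1(Q)} \exp\!\left(-\frac{\epsilon}{\Delta_1(Q)}\,\bigl|y_j - q_j(\data)\bigr|\right).
\end{align*}

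First I would fix neighboring datasets $\data_1 \sim \data_2$ and form the ratio $p_{\data_1}(\vec{y})/p_{\data_2}(\vec{y})$. The leading constants cancel, leaving
\begin{align*}
\frac{p_{\data_1}(\vec{y})}{p_{\data_2}(\vec{y})} = \exp\!\left(\frac{\epsilon}{\Delta_1(Q)} \sum_{j=1}^{k} \Bigl(\bigl|y_j - q_j(\data_2)\bigr| - \bigl|y_j - q_j(\data_1)\bigr|\Bigr)\right).
\end{align*}
Next I would apply the reverse triangle inequality coordinatewise, giving $|y_j - q_j(\data_2)| - |y_j - q_j(\data_1)| \leq |q_j(\data_1) - q_j(\data_2)|$, so that the exponent is at most $\frac{\epsilon}{\Delta_1(Q)} \sum_{j} |q_j(\data_1) - q_j(\data_2)|$. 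By the definition of $L_1$ global sensitivity, that sum is bounded by $\Delta_1(Q)$ for any neighboring pair, so the whole ratio is at most $\exp(\epsilon) = e^\epsilon$.

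Finally I would lift this pointwise density bound to the set-wise DP guarantee by integrating: for any measurable $S \subseteq \range(M)$, $P(M(\data_1) \in S) = \int_S p_{\data_1}(\vec{y})\,d\vec{y} \leq e^\epsilon \int_S p_{\data_2}(\vec{y})\,d\vec{y} = e^\epsilon\, P(M(\data_2) \in S)$, which is exactly $(\epsilon, 0)$-DP. Honestly, this is a textbook argument with no deep obstacle; the only place that needs care is confirming that the \emph{$L_1$} sensitivity is the right quantity, i.e., that the bound must aggregate the per-query changes as a sum rather than a maximum. This is precisely what makes the triangle-inequality step collapse against $\Delta_1(Q)$ to yield the clean factor $e^\epsilon$, and it is what forces the Laplace scale to be $\Delta_1(Q)/\epsilon$ rather than something smaller.
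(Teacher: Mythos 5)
Your proposal is correct and is exactly the canonical density-ratio argument for the Laplace mechanism: the paper itself offers no proof of this theorem (it is stated as a preliminary and cited to \cite{dwork06Calibrating}), and your argument matches the standard one from that reference, correctly using the $L_1$ sensitivity $\Delta_1(Q)=\sup_{\data_1\sim\data_2}\sum_{q\in Q}|q(\data_1)-q(\data_2)|$ so the coordinatewise triangle-inequality bounds sum to at most $\Delta_1(Q)$. No gaps; the lifting from a pointwise density bound to all measurable $S\subseteq\range(M)$ by integration is exactly right.
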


\begin{theorem}[Gaussian Mechanism \protect\cite{zcdp}]
Given a privacy parameter $\epsilon>0$, a set $Q$ of queries, and an input dataset $\data$, the mechanism $M$ that returns the set of noisy answers $\{q(\data)+N(0, \Delta_2(Q)^2/(2\rho))\}_{q\in Q}$ satisfies $\rho$-zCDP.
\end{theorem}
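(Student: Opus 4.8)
The plan is to recognize the output of $M$ as a product of independent univariate Gaussians and then exploit the additivity of Rényi divergence across independent coordinates, so that everything reduces to a single one-dimensional Gaussian integral. Fix a neighboring pair $\data_1\sim\data_2$ and write $\sigma^2 = \Delta_2(Q)^2/(2\rho)$. Since each query answer is perturbed by an independent draw of $N(0,\sigma^2)$, the law of $M(\data_b)$ is the product measure $\prod_{q\in Q} N(q(\data_b),\sigma^2)$ for $b\in\{1,2\}$, i.e. a spherical Gaussian centered at the true answer vector.

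First I would record that $\mathcal{D}_\alpha$ is additive over independent coordinates. Unwinding the definition, $E_{x\sim P}[P(x)^{\alpha-1}/Q(x)^{\alpha-1}] = \int p(x)^\alpha q(x)^{1-\alpha}\,dx$, and for a product measure this integral factorizes over the coordinates, so the logarithm converts the product into a sum and $\mathcal{D}_\alpha\!\left(\prod_q P_q \,\middle\|\, \prod_q Q_q\right) = \sum_{q\in Q}\mathcal{D}_\alpha(P_q\|Q_q)$. This reduces the claim to a per-query statement.

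The computational heart — and the one step requiring real work — is the one-dimensional divergence $\mathcal{D}_\alpha(N(\mu_1,\sigma^2)\|N(\mu_2,\sigma^2))$ between two Gaussians of equal variance. Here I would substitute the densities into $\int p^\alpha q^{1-\alpha}\,dx$, collect the quadratic terms in the exponent, and complete the square. Because the two variances agree, the coefficient of $x^2$ reduces to $1/(2\sigma^2)$ with no dependence on $\alpha$, so the integral is that of an honest Gaussian and evaluates to $\exp\!\big(\tfrac{\alpha(\alpha-1)(\mu_1-\mu_2)^2}{2\sigma^2}\big)$. Dividing by $\alpha-1$ inside the logarithm then yields the clean formula $\mathcal{D}_\alpha(N(\mu_1,\sigma^2)\|N(\mu_2,\sigma^2)) = \frac{\alpha(\mu_1-\mu_2)^2}{2\sigma^2}$.

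Finally I would assemble the pieces. Summing the one-dimensional result over $q\in Q$ gives $\mathcal{D}_\alpha(M(\data_1)\|M(\data_2)) = \frac{\alpha}{2\sigma^2}\sum_{q\in Q}(q(\data_1)-q(\data_2))^2$. By the definition of $L_2$ global sensitivity applied to the neighboring pair $\data_1\sim\data_2$, the sum is at most $\Delta_2(Q)^2$, and substituting $\sigma^2=\Delta_2(Q)^2/(2\rho)$ collapses the bound to exactly $\rho\alpha$. Since this holds for every $\alpha>1$ and every neighboring pair, it is precisely the definition of $\rho$-zCDP. The main obstacle is the Gaussian integral of the third step; the additivity reduction and the sensitivity bound are bookkeeping by comparison.
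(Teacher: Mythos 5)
Your proof is correct, and it is essentially the canonical argument from Bun and Steinke's paper, which this statement cites without reproducing a proof: factorize the Rényi divergence over independent coordinates, compute the equal-variance Gaussian divergence $\mathcal{D}_\alpha(N(\mu_1,\sigma^2)\,\|\,N(\mu_2,\sigma^2)) = \alpha(\mu_1-\mu_2)^2/(2\sigma^2)$ by completing the square, and apply the $L_2$ sensitivity bound to get exactly $\rho\alpha$. Nothing is missing; your completion-of-the-square step (with the $x^2$ coefficient independent of $\alpha$) and the final bookkeeping are all sound.
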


All of these privacy definitions are postprocessing invariant \cite{diffpbook}. That is, let $A$ be an arbitrary algorithm. Then $A\circ M$ (i.e., the algorithm that outputs $A(M(\data))$) satisfies $(\epsilon,\delta)$-DP (resp., $\rho$-zCDP) if $M$ satisfies $(\epsilon,\delta)$-DP (resp., $\rho$-zCDP); in other words, the privacy parameters do not degrade.

They also have useful sequential composition properties. Let $M_1,\dots, M_k$ be algorithms that satisfy pure differential privacy with corresponding parameters $\epsilon_1,\dots, \epsilon_k$ (resp., zCDP with corresponding privacy parameters $\rho_1,\dots, \rho_k$), then the algorithm  $M$ that releases all of their outputs (i.e., releases $M_1(\data), \dots, M_k(\data)$) satisfies $\sum_i\epsilon_i-$differential privacy \cite{diffpbook} (resp., $\sum_i\rho_i$-zCDP \cite{zcdp}).

\section{The Uncertainty Principle}\label{sec:uncertainty}
The setting of $d$ disjoint queries $q_1,\dots, q_d$ and their sum $q_*$ are some of the most important types of query sets. As discussed earlier, population counts in small geographic regions such as census blocks (examples of $q_i$) are important for redistricting while population counts in larger regions such as counties (examples of $q_*$) are used for federal and state funding formulas. Thus any tension between the $q_i$ and $q_*$ can have significant impact on the entire U. S. population. While this is just one example of a query set, almost every table produced in previous censuses is a query set with disjoint queries and their sums \cite{census2010}. Thus this is an important collection of queries to study.

\subsection{Lower Bounds}
We first remove some restrictions on $M$. While its input is a dataset, its output can be a positively
 weighted dataset -- a collection of records in which each record $r$ has a nonnegative weight $w$. A query $q$ with predicate $\psi$ can be evaluated over a weighted dataset by summing the weights of the records that satisfy $\psi$.
 This simplifies our proofs and slightly increases generality,
 since normal microdata is a special case of positively weighted data in which all weights are 1 (hence lower bounds for positively weighted data are also lower bounds for normal microdata). It also emphasizes the fact that these lower bounds arise specifically because negative query answers are disallowed.
 The lower bound  is the following (see supplementary material for proofs).

\begin{theoremEnd}[category=lowerbound]{theorem}\label{thm:lowerbound}
Let $q_1\dots, q_d$ be a set collection of disjoint queries and let $q_*$ be their sum.
Let $M$ be a randomized algorithm  whose input is a dataset and whose output is a positively weighted dataset. Suppose $M$ guarantees that for each query $q_i$ and dataset $\data$, $E[(q_i(\data) - q_i(M(\data)))^2] \leq C^2$ and $E[(q_*(\data) - q_*(M(\data)))^2] \leq D^2$ for some values $C$ and $D$, where the expectation is \textbf{only}  over the randomness in $M$. 
\begin{itemize}
\item If $M$ satisfies $\epsilon$-differential privacy then for any $k>0$,  we have   $e^{2\epsilon (2C + k)} \geq \frac{k(d-1)}{16C + 8D + 4k}$ which implies
\underline{\textbf{(a)}} if $D^2 \leq \lambda/\epsilon^2$ for some constant $\lambda$, then $C^2\in\Omega(\frac{1}{\epsilon^2}\log^2(d))$, and \underline{\textbf{(b)}} if  $C\leq \lambda/\epsilon^2$ then  $D\in\Omega(d^2/\epsilon^2)$.

\item If $M$ satisfies $(\epsilon,\delta)$-DP then for any $k>0$, we have $\left(\frac{\delta}{\epsilon} + \frac{4C + 2D + k}{k(d-1)}\right)e^{4\epsilon C + 2k\epsilon} \geq 1/4$, which implies
\underline{\textbf{(a)}} if $D^2\leq \lambda/\epsilon^2$ for some constant $\lambda$, then $C^2\in\Omega\left(\min( \frac{1}{\epsilon^2}\log^2(d),\;\frac{1}{\epsilon^2}\log^2\frac{\epsilon}{\delta} )\right)$; \underline{\textbf{(b)}} if $C\leq \lambda/\epsilon^2$ then either $\epsilon\in O(\delta)$ or $D^2\in\Omega(d^2/\epsilon^2)$.

\item If $M$ satisfies $\rho$-zCDP, then the tradeoff function between $C$ and $D$ (which is more complex and omitted due to space constraints) implies:
\underline{\textbf{(a)}} if $D^2\leq \lambda/\rho$ for some $\lambda$, then $C^2\in\Omega\left(\log(d)/\rho\right)$, and \underline{\textbf{(b)}} if $C^2\leq \lambda/\rho$, then for any $\gamma\in (0,1)$,  we must have $D^2\in \Omega(d^{2\gamma}/\rho)$.
%
\end{itemize}
\end{theoremEnd}
\begin{proofEnd}
The lower bounds for pure and approximate DP (but not zCDP) can be proved as consequences of the work of Balcer and Vadhan \cite{Balcer_Vadhan_2019}. To make this material more self-contained, we write out a direct proof of the lower bounds by borrowing their proof technique.

For notational convenience, we will let $\vec{x}[i]$ denote $q_i(\data)$ (so $\sum_i \vec{x}[i] = \sum_i q_i(\data)=q_*(\data)$). Similarly, we let $\widetilde{\vec{x}}[i]$ denote $q_i(\widetilde{\data})$ (so $\sum_i \widetilde{\vec{x}}[i] = \sum_i q_i(\widetilde{\data})=q_*(\widetilde{\data})$). Thus the vector $\vec{x}$ represents the true point query answers and $\widetilde{\vec{x}}$ represents the privacy protected point query answers. In particular, $\vec{x}$ is a vector of nonnegative integers and $\widetilde{\vec{x}}$ is a vector of nonnegative real numbers.

All probabilities are taken with respect to only the randomness in $M$.

In this proof, \underline{$\alpha$, $\beta$, and $k$} are constants that we will set later.
For any fixed $j$, by Markov's inequality, 
\begin{align}
P(|\widetilde{\vec{x}}[j] - \vec{x}[j]| \geq \alpha C) \leq \frac{E\left[(\vec{x}[j] - \widetilde{\vec{x}}[j])^2 \right]}{C^2\alpha^2}&\leq \frac{1}{\alpha^2}\label{eq:markovC}\\
P\left(\Big|\sum_{i=1}^{d}\widetilde{\vec{x}}[i] - \sum_{i=1}^{d}\vec{x}[i]\Big| \geq \beta D\right) &\leq  \frac{1}{\beta^2}\label{eq:markovD}
\end{align}

For each positive integer $n$, positive number $k$, and  $i=2,\dots, d$, define the set 
\begin{align*}
G_{i,n,k}&=\left\{\widetilde{\vec{x}} ~:~
\substack{\widetilde{\vec{x}}[i] \in [k, k+2\alpha C],\\ \widetilde{\vec{x}}[1]\in [n-2\alpha C -k, n-k], \\\sum_{j=1}^{d}\widetilde{x}[j] \in [n-\beta D, n+\beta D]}\right\}
\end{align*}
The intuition behind the meaning of $G_{i,n,k}$ is that suppose we had a dataset $\data_i$ with vector $\vec{x}_i$ of point query answers where the $i^\text{th}$ point query  satisfied $\vec{x}_i[i]=k+\alpha C$ and $\vec{x}_i[1]=n-k-\alpha C$ (all other entries are 0) then $G_{i,n,k}$ is the set of all possible outputs $M(\data_i)$ where the $1^\text{st}$ and $i^\text{th}$ entries are within $\alpha C$ of their true value and the sum is within $\beta D$ of its true value.

For any fixed $n$ and $k$, we next examine  how many $G_{i,n,k}$ a vector $\widetilde{\vec{x}}$ can belong to (i.e., an overlap condition). A necessary condition for $\widetilde{\vec{x}}$ to belong to some $G_{i,n,k}$ is that $\widetilde{\vec{x}}[i]\geq k$ and $\widetilde{\vec{x}}[1]\geq n-2\alpha C-k$. This means that after assigning the minimal necessary mass to the $1^\text{st}$ element, there is at most $k+ 2\alpha C  +\beta D$ mass to assign to the other elements (without exceeding the upper limit of $n+\beta D$  on the sum of  all the cells). Since at least $k$ units of this mass must be assigned to the $i^\text{th}$ element in order for $\widetilde{\vec{x}}$ to belong to $G_{i,n,k}$, this means that $\vec{\widetilde{x}}$ can belong to $G_{i,n,k}$ for at most $\frac{2\alpha C+\beta D +k}{k}$ different choices of $i$. 

Now define $ \vec{x}_1,\dots, \vec{x}_{d}$ as follows. $\vec{x}_1[1]=n$ with all other entries being $0$. Next for $i=2,\dots,d$ we set $\vec{x}_i[1]=n-\alpha C - k$ and $\vec{x}_i[i]=\alpha C + k$ and all other entries of $\vec{x}_i$ are 0. For each $i$, Let $\data_i$ be a database whose point query answers are $\vec{x}_i$, which is possible since the point queries are disjoint (and this means that $\data_1$ differs from all of the others by the addition/removal of at least $2(\alpha C + k)$ records).

\textbf{For pure differential privacy}, we have:
\begin{align*}
   1 &\geq P\left(M(\data_1) \in \bigcup_{i=2}^{d}G_{i,n,k}\right)
    \geq \frac{k}{2\alpha C + \beta D+k} \sum_{i=2}^{d} P\left(M(\data_1) \in G_{i,n,k}\right) \text{ by overlap condition}\\
    &\geq e^{-\epsilon 2(\alpha C+k)} \frac{k}{2\alpha C + \beta D+k}\sum_{i=2}^{d} P\left(M(\data_i) \in G_{i,n,k}\right)
    \text{ by group privacy property of $\epsilon$-DP \cite{diffpbook}}\\
    &\geq e^{-\epsilon 2(\alpha C+k)} \frac{k}{2\alpha C + \beta D+k}\sum_{i=2}^{d} \left(1 - \frac{2}{\alpha^2}-\frac{1}{\beta^2}\right)
    \text{ by the Markov inequality and union bound}\\
    &= e^{-\epsilon 2(\alpha C+k)} \frac{k(d-1)}{2\alpha C + \beta D+k} \left(1 - \frac{2}{\alpha^2}-\frac{1}{\beta^2}\right)
\end{align*}

Now we set $\alpha=2$ and $\beta=2$ to get 

\begin{align*}
    e^{2\epsilon (2C + k)} \geq \frac{k(d-1)}{16C + 8D + 4k} 
\end{align*}
If $D$ is allowed to be  $\leq C$, then we set $k=C$ and get
\begin{align*}
    e^{6\epsilon C} &\geq \frac{d-1}{28} &&\Rightarrow& C &\geq \frac{1}{6\epsilon}\log\frac{d-1}{28}
\end{align*}
In general, if $D\in O(C)$ (i.e., $D$ is allowed to be at most some constant times $C$) then similar arguments show $C\in \Omega\left(\frac{1}{\epsilon}\log(d)\right)$.

If $D$ is allowed to be $> C$ then we set $k=1/\epsilon$ and get
\begin{align*}
    e^{4\epsilon C + 2} &\geq \frac{(d-1)}{24\epsilon D + 4} &&\Rightarrow &C &\geq \frac{1}{4\epsilon} \left(\log\left(\frac{(d-1)}{24\epsilon D + 4}\right)- 2\right)
\end{align*}
In general, if $D\in\Omega(C)$ (i.e., $D$ is required to be at least some constant times $C$) then similar arguments show that $C\in\Omega\left(\frac{1}{\epsilon}\log(\frac{d}{\epsilon D})\right)$.

Putting these facts together, we see that if $D\in O(1/\epsilon)$ then $C\in\Omega(\frac{1}{\epsilon}\log(d))$. Meanwhile, if $C\in O(1/\epsilon)$ then we must have $D\in\Omega(d/\epsilon)$.


\textbf{For approximate, differential privacy}, using the group privacy property of approximate differential privacy \cite{Balcer_Vadhan_2019},
\begin{align*}
    1 &\geq P\left(M(\data_1) \in \bigcup_{i=2}^{d}G_{i,n,k}\right)
    \geq \frac{k}{2\alpha C + \beta D+k} \sum_{i=2}^{d} P\left(M(\data_1) \in G_{i,n,k}\right) \\
    &\geq  \frac{k}{2\alpha C + \beta D+k}\sum_{i=2}^{d} \left(e^{-\epsilon 2(\alpha C+k)}P\left(M(\data_i) \in G_{i,n,k}\right) - \delta/\epsilon\right) \text{ by group privacy}\\
    &\geq  \frac{k}{2\alpha C + \beta D+k}\sum_{i=2}^{d} \left(e^{-\epsilon 2(\alpha C+k)}\left(1-\frac{2}{\alpha^2}-\frac{1}{\beta^2}\right) - \delta/\epsilon\right) \text{ Markov inequality, union bound}\\
    &=  \frac{k(d-1)}{2\alpha C + \beta D+k} \left(e^{-\epsilon 2(\alpha C+k)}\left(1-\frac{2}{\alpha^2}-\frac{1}{\beta^2}\right) - \delta/\epsilon\right)\\
    %
\end{align*}
Setting $\alpha=2$ and $\beta=2$ gives 
\begin{align*}
1 \geq \frac{k(d-1)}{4 C + 2 D+k} \left(\frac{1}{4}e^{-\epsilon 2(2 C+k)} - \delta/\epsilon\right)
\text{ and so }
    \left(1+\frac{\delta}{\epsilon}\; \frac{k(d-1)}{4 C + 2 D+k} \right)e^{4\epsilon C + 2k\epsilon} &\geq \frac{1}{4}\;\frac{k(d-1)}{4 C + 2 D+k} 
\end{align*}
and this is the same as 
\begin{align*}
\left(\frac{\delta}{\epsilon} + \frac{4C + 2D + k}{k(d-1)}\right)e^{4\epsilon C + 2k\epsilon} \geq 1/4
\end{align*}
Noting that for any $z$, $1+z\leq 2\max\left(1, z\right)$
and so
\begin{align*}
e^{4\epsilon C + 2k\epsilon} &\geq \frac{1}{8}\min\left(\frac{k(d-1)}{4 C + 2 D+k},\; \frac{\epsilon}{\delta}\right)\\
%
\end{align*}
Proceeding as we did for pure differential privacy,
if $D$ is allowed to be $O(C)$, then $C\in \Omega\left(\min\left( \frac{1}{\epsilon}\log(d),\;\frac{1}{\epsilon}\log\frac{\epsilon}{\delta} \right)\right)$; if $D$ is allowed to be $\Omega(C)$ then $C\in\Omega\left(\min(\frac{1}{\epsilon}\log\frac{d}{\epsilon D},\;\frac{1}{\epsilon}\log\frac{\epsilon}{\delta})\right)$.

Putting this together, if $D\in O(1/\epsilon)$ then $C\in\Omega\left(\min\left( \frac{1}{\epsilon}\log(d),\;\frac{1}{\epsilon}\log\frac{\epsilon}{\delta} \right)\right)$ and if $C\in O(1/\epsilon)$ then either $\epsilon\in O(\delta)$ or $D\in\Omega(d/\epsilon)$.

\textbf{For $\rho$-zCDP}, consider a random variable $X$ that is uniform over $\data_2,\dots,\data_{d}$ (i.e., with probability $1/(d-1)$, $X$ is the dataset $\data_i$). 
%
Note that the $\data_i$ we have been using can be constructed so that  $i\neq j$, $\data_i$ and $\data_j$ differ on the addition/removal of $2aC + 2k$ people. Let $I(\cdot; \cdot)$ denote mutual information and $H(\cdot)$ denote entropy. 
By the group privacy property of zCDP \cite{zcdp} we have two facts relating group privacy to mutual information: (1) $\rho (2\alpha C + 2k)^2  \geq I(M(\vec{\data}_i); M(\vec{\data}_j))$ for all $i$ and $j$ (from Proposition 5.3 proof in \cite{zcdp}) and (2) the corollary that $\rho (2\alpha C + 2k)^2 \geq I(X, M(X))$ (from Proposition 6.1 proof in \cite{zcdp}). Then

%
\begin{align}
\rho (2\alpha C + 2k)^2
&\geq    I(X; M(X))
= H(X) - H(X ~|~M(X))\nonumber\\
&= \log_2(d-1)- H(X ~|~M(X))\label{eq:zcdp_part1}
\end{align}
and now we need  to upper bound $H(X~|~M(X))$.  Define  $G$ to be the event that $M(X)$ is in the  $G_{i,n,k}$ that corresponds to the realized value of $X$ (i.e., the event $X=\data_j \Rightarrow M(X)\in G_{j,n,k}$ for $j=2,\dots, d$). Then we obtain a Fano-like inequality (following the proof structure in \cite{cover}) as follows:
\begin{align}
    H(X~|~M(X))&= H(X~|~M(X)) + H(G~|~X, M(X))\nonumber\\
    &\quad\text{(the last entropy is 0 since $G$ is a deterministic function of $X$ and $M(X)$)}\nonumber\\
    &= H(G, X ~|~M(X))
    \quad\text{ by the chain rule for conditional entropy}\nonumber\\
    &= H(G~|~M(X)) + H(X ~|~ G, M(X))\quad \text{ by chain rule for conditional entropy}\nonumber\\
    &\leq 1 +  H(X ~|~ G, M(X)) \quad \text{ since G is binary, its entropy is $\leq 1$}\nonumber\\
    &=1 + P(G=0)H(X~|~M(X), G=0)  + P(G=1)H(X~|~M(X), G=1)\nonumber\\
    &\leq 1 + P(G=0)\log_2(d-1)+ P(G=1)H(X~|~M(X), G=1)\nonumber\\
    &\quad\text{(since the entropy of $X$ is $\leq \log_2(d-1)$  )}\nonumber\\
    &\leq 1 + P(G=0)\log_2(d-1)+ P(G=1)\log_2\left(\frac{2\alpha C + \beta D + k}{k}\right)\nonumber\\
    &\text{(This follows from $G=1$, by the overlap condition, since then $M(X)$ can }\nonumber\\
    &\text{belong to at most $\frac{2\alpha C + \beta D + k}{k}$ of the $G_{i,n,k}$ so conditioned on knowing $M(X)$}\nonumber\\
    &\text{there are at most $\frac{2\alpha C + \beta D + k}{k}$ possible choices for $X$ }\nonumber\\ 
    &\text{and hence  $\log_2$ of this quantity upper bounds the conditional entropy)}\nonumber\\
    &\leq 1 + P(G=0)\log_2(d-1) + \log_2\left(\frac{2\alpha C + \beta D + k}{k}\right)\nonumber\\
    &\leq 1 + (\frac{2}{\alpha^2} + \frac{1}{\beta^2})\log_2(d-1) + \log_2\left(\frac{2\alpha C + \beta D + k}{k}\right)\label{eq:zcdp_part2}
\end{align}
Where the last inequality follows from the Markov inequality and  union bound on $P(G=0)$.
Now, setting $\alpha=\beta=2$ and combining Equations \ref{eq:zcdp_part1} and \ref{eq:zcdp_part2}, we have:

\begin{align*}
    \rho(4 C + 2k)^2 \geq \frac{1}{4}\log_2(d-1) - \log_2\left(\frac{4 C + 2 D + k}{k}\right) - 1
\end{align*}

\underline{If $D$ is allowed to be $\leq C$\mystrut}, we set $k=C$ to get

\begin{align*}
    \rho(6C)^2 &\geq \frac{1}{4}\log_2(d-1) - \log_2\left(\frac{4 C + 2 D + C}{C}\right) - 1 \geq \frac{1}{4}\log_2(d-1) - \log_2\left(\frac{7C}{C}\right) - 1\\
    &\geq \frac{1}{4}\log_2(d-1)-4
    \Rightarrow C\geq\frac{1}{6\sqrt{\rho}}\sqrt{\frac{\log_2(d-1)}{4} - 4}
\end{align*}

so in general, if $D\in O(C)$ then similar arguments show that $C\in\Omega\left(\sqrt{\frac{1}{\rho}\log(d)}\right)$.

\underline{If $D$ is allowed to be $> C$\mystrut}, then let $\gamma$ be any number strictly between $0$ and $1$. 
Then we set $k=1/\sqrt{\rho}$,  and $\alpha=\beta=\sqrt{\frac{3}{(1-\gamma)}}$. Combining Equations \ref{eq:zcdp_part1} and \ref{eq:zcdp_part2}
\begin{align*}
\lefteqn{ (2\alpha\sqrt{\rho} C + 2)^2}\\ &\geq \left(1-\frac{2}{\alpha^2} - \frac{1}{\beta^2}\right)\log_2(d-1) - 
\log_2\left(2\alpha\sqrt{\rho} C + \beta\sqrt{\rho} D + 1\right) - 1\\
&=\gamma \log_2(d-1) - 
\log_2\left(2\alpha\sqrt{\rho} C + \beta\sqrt{\rho} D + 1\right) - 1\\
&\geq \gamma \log_2(d-1) - 
\log_2\left(3\sqrt{\frac{3}{1-\gamma}} \sqrt{\rho}D + 1\right) - 1
\end{align*}
which implies
\begin{align*}
    C \geq \sqrt{\frac{1-\gamma}{12}}\sqrt{\frac{\gamma\log_2(d-1) - \log_2(3\sqrt{\frac{3}{1-\gamma}}\sqrt{\rho}D + 1) - 1}{\rho}} - \frac{1}{\sqrt{\rho}}\sqrt{\frac{1-\gamma}{3}}
\end{align*}

In general, if $D$ is allowed to be $\Omega(C)$ then similar arguments show that
$C\in \Omega\left(\sqrt{1-\gamma}\sqrt{\frac{\gamma\log_2(d-1) - \log_2(\sqrt{\frac{3}{1-\gamma}}\sqrt{\rho}D + 1) - 1}{\rho}} - \frac{1}{\sqrt{\rho}}\sqrt{\frac{1-\gamma}{3}}\right)$.

Putting all of this together, if $D\in O(1/\sqrt{\rho})$ then $C\in\Omega\left(\sqrt{\log(d)/\rho}\right)$. But in order to get $C=O(1/\sqrt{\rho})$, we must have $D\in \Omega(d^\gamma/\sqrt{\rho})$.

\end{proofEnd}
Balcer and Vadhan \cite{Balcer_Vadhan_2019} recently showed a statistical price of privacy-preserving release of the top-k counts in a histogram.  They proved an analogous $O(\log^2(d/k))$ penalty for point queries under $\epsilon$-DP (and also results for approximate DP). Interestingly, although they did not consider tradeoffs with the sum query (since its value was assumed to be public in their work), the results in our Theorem \ref{thm:lowerbound} (for $\epsilon$-DP and approximate DP, but not zCDP) can be proved using the result of their Theorem 7.2. 

We also note that the tradeoff functions between $C$ and $D$ in Theorem \ref{thm:lowerbound} show a much stronger result than items (a) and (b) in Theorem \ref{thm:lowerbound}. For example, they rule out the possibility that both $C^2$ and $D^2$ can simultaneously be just slightly larger than $O(1/\epsilon^2)$.
To understand and interpret Theorem \ref{thm:lowerbound},
let us compare to the Laplace and Gaussian mechanisms, which can produce negative query answers, hence are not equivalent to producing positively weighted datasets (hence not covered by Theorem \ref{thm:lowerbound}).

It is easy to see that $\Delta_1(q_1,\dots, q_d, q_*)=2$ and $\Delta_2(q_1,\dots, q_d, q_*)=\sqrt{2}$. Hence, an algorithm $M^\prime_\epsilon$ can add independent Lap$(2/\epsilon)$ noise to each query to satisfy $\epsilon$-DP, and an algorithm $M^\prime_\rho$ can add independent $N(0, 1/\rho)$ noise to each query to satisfy $\rho$-zCDP. Thus $M^\prime_\epsilon$ achieves expected squared error of $8/\epsilon^2$ for $q_*$ and each $q_i$ (i.e., $C^2=D^2=8/\epsilon^2$). Meanwhile $M^\prime_\rho$ achieves $1/\rho$ expected squared error ($C^2=D^2=1/\rho$). These expected error guarantees hold for all datasets $\data$.

Theorem \ref{thm:lowerbound} says that privacy-preserving algorithms $M$ that are required to produce positively weighted datasets cannot guarantee the same low error -- there are input datasets $\data$ for which the expected errors can be significantly larger. In the case of $M$ that satisfy $\epsilon$-DP, if we want low error for the sum query (e.g., $D^2=O(1/\epsilon^2)$, matching the Laplace mechanism), on some datasets we may need to pay a $\log^2(d)$ penalty for some point queries (i.e., there will be specific point queries with consistently large error). On the other hand, if we want low error for the point queries (e.g., $C^2=O(1/\epsilon^2)$) then on some datasets we will pay a $d^2$ penalty on the sum query.

In the case of $\rho$-zCDP, the penalties are smaller. If we want to match the error of the Gaussian mechanism on the sum query, we may need to pay a penalty of $\log(d)$ on point queries; if we want $O(1/\rho)$ expected squared error on each point query, we may need to pay a penalty of nearly $d^2$ on $q_*$.

For approximate DP, the weakest privacy definition here, the degradation factor can be roughly $\log^2(\epsilon/\delta)$ no matter how large $d$ is.

\paragraph{Remark 1.} The lower bounds in Theorem \ref{thm:lowerbound} imply that if  privacy-preserving microdata is generated by  obtaining noisy measurement query answers (e.g., with the Laplace or Gaussian mechanisms)  and then postprocessing the noisy answers (e.g., \cite{lioptim,mwem}), some of the measurement queries computed directly from the privacy-preserving microdata will have errors that are larger than their original noisy answers.

\paragraph{Remark 2.} All is not lost, however, as the proofs are based on packing arguments that show that these errors are unavoidable for some difficult datasets (but not all datasets are difficult). An example of a difficult dataset $\data^*$ under pure differential privacy is one for which exactly one of the query answers $q_1(\data^*),\dots, q_d(\data^*)$ equals $\log(d)/\epsilon$ while the other $d-1$ queries equal 0 (clearly, $q_*(\data^*)=\log(d)/\epsilon$). As mentioned earlier, the Laplace mechanism \cite{diffpbook}, which does not produce microdata, can achieve $8/\epsilon^2$ per query error 
although many of the noisy query answers will be negative. However, the proof of Theorem \ref{thm:lowerbound} implies that no algorithm that produces privacy-protected microdata (and hence nonnegative query answers) can do as well on such a dataset. In fact, for this specific difficult dataset $\data^*$, the large error described by Theorem \ref{thm:lowerbound} will either occur for $q_*$ or for that $q_i$ whose  answer on $\data^*$ is $\log(d)/\epsilon$. On the other hand, an easy dataset is one for which $q_1(\data),\dots, q_d(\data)$ are all large, since almost no effort is needed in ensuring that the privacy-protected query answers are nonnegative.

\subsection{Upper Bounds}
These lower bounds are nearly tight, as shown by the upper bounds in Theorem \ref{thm:upperboundA}. The proofs construct postprocessing algorithms that first obtain noisy answers $a_1,\dots, a_d, a_*$ to the queries $q_1,\dots, q_d, q_*$. A postprocessing step converts the $a_i$ and $a_*$ into consistent noisy answers $a_1^\prime, \dots, a_d^\prime, a_*^\prime$ (i.e., they are nonnegative and $\sum_i a^\prime_i=a^\prime_*$). Weighted datasets are constructed from the latter quantities. To get weighted datasets with higher accuracy on point queries, the postprocessing ignores $a_*$ and sets $a^\prime_i=\max\{0, a_i\}$. To obtain synthetic data with higher accuracy on the sum query, $a_*^\prime$ is set to $a_*$ and the $a_i^\prime$ are obtained by minimizing squared distance to the $a_i$ subject to the $a^\prime_i$ being nonnegative and adding up to $a_*$. The full proofs are in the supplementary material.


\begin{theoremEnd}[category=upperbound]{theorem}[Upper bound for pure DP and zCDP]\label{thm:upperboundA}
Let $q_1,\dots, q_d$ be a set of disjoint queries and let $q_*$ be their sum. Given privacy parameters $\epsilon>0$ and $\rho>0$, there exist algorithms $M_\epsilon, M_\rho, M^\prime_\epsilon, M^\prime_\rho$, $M^\prime_{\epsilon,\delta}$ that output a positively weighted dataset and have the following properties:
\begin{enumerate}
\item\label{item:dppoint} $M_\epsilon$ satisfies $\epsilon$-DP, and for all $\data$ and $i$, $E\left[(q_i(M_\epsilon(\data))-q_i(\data))^2\right] \leq 2/\epsilon^2$ and $E\left[(q_*(M_\epsilon(\data))-q_*(\data))^2\right] \leq 2d^2/\epsilon^2$.
\item\label{item:zcdppoint}
$M_\rho$ satisfies $\rho$-zCDP, and for all $\data$ and $i$, $E\left[(q_i(M_\rho(\data))-q_i(\data))^2\right] \leq 1/(2\rho)$ and $E\left[(q_*(M_\rho(\data))-q_*(\data))^2\right] \leq d^2/(2\rho)$.

\item\label{item:dpsum} $M^\prime_\epsilon$ satisfies $\epsilon$-DP, and for all $\data$ and $i$, $E\left[(q_i(M^\prime_\epsilon(\data))-q_i(\data))^2\right]\in O(\log^2(d)/\epsilon^2)$ and $E\left[(q_*(M^\prime_\epsilon(\data))-q_*(\data))^2\right] \in O(1/\epsilon^2)$

\item\label{item:zcdpsum} $M^\prime_\rho$ satisfies $\rho$-zCDP, and for all $\data$ and $i$, $E\left[(q_i(M^\prime_\rho(\data))-q_i(\data))^2\right]\in O(\log(d)/\rho)$ and $E\left[(q_*(M^\prime_\rho(\data))-q_*(\data))^2\right] \in O(1/\rho)$

\item\label{item:appsum} $M^\prime_{\epsilon,\delta}$ satisfies $(\epsilon,\delta)$-DP and for all $\data$ and $i$, $E\left[(q_i(M^\prime_{\epsilon,\delta}(\data))-q_i(\data))^2\right]\in O(\log^2(1/\delta)/\epsilon^2 + 1)$ and $E\left[(q_*(M^\prime_{\epsilon,\delta}(\data))-q_*(\data))^2\right] \in O(1/\epsilon^2)$. Also note $M_\epsilon$ and $M_\epsilon^\prime$ satisfy $\epsilon,\delta$-DP.

\end{enumerate}
\end{theoremEnd}
\begin{proofEnd}
The double-sided geometric mechanism $DGeo(\epsilon)$ is a discrete version of the Laplace distribution, supported over integers, with probability mass function $p(k)= \frac{1-e^{-\epsilon}}{1+e^{-\epsilon}}e^{-\epsilon|k|}$ \cite{GRS09}. It has several useful properties: (a) its mean is 0, (b) its variance is $2\frac{e^{-\epsilon}}{(1-e^{-\epsilon})^2}\leq 2/\epsilon^2$, (c) given an integer-valued query $q$, adding $DGeo(\epsilon/\Delta_1(q))$ to its answer satisfies $\epsilon$-differential privacy.

Similarly, the discrete Gaussian $DGauss(0, 1/(2\rho))$ is a discrete version of the Gaussian distribution \cite{dgm} with several useful properties: (a) its mean is 0, (b) its variance is less than that of $N(0, 1/(2\rho))$, (c) given an integer-valued query $q$, adding $DGauss(0, \Delta_2(q)^2/(2\rho))$ to its answer satisfies $\rho$-zcdp.

\textbf{To prove Item \ref{item:dppoint}},
let $r_1,\dots, r_d$ be records satisfying the predicates for point queries $q_1,\dots, q_d$, respectively. Let $M_\epsilon$ be the algorithm that first computes nonnegative noisy query answers $a_i = \max\{0, q_i(\data) + DGeo(1/\epsilon)\}$ for $i=1,\dots, d$ and then outputs the synthetic dataset $\widetilde{\data}$ that has $a_i$ copies of record $r_i$ for each $i$. 
Note that $M_\epsilon$ does not obtain a noisy answer to $q_*$, and so it satisfies $\epsilon$-differential privacy since $\Delta_1(q_1,\dots, q_d)=1$
Since $q_i(\data)\geq 0$ for all $i$, we have:
\begin{align*}
E\left[(q_i(\data)-q_i(M_\epsilon(\data)))^2\right] &= E\left[(q_i(\data) - \max\{0, q_i(\data)+DGeo(\epsilon)\})^2\right]\\
&\leq E\left[(q_i(\data) -  (q_i(\data)+DGeo(\epsilon)))^2 \right]\leq 2/\epsilon^2\\
\end{align*}
Furthermore 
\begin{align*}
\lefteqn{E\left[(q_*(\data) - q_*(M_\epsilon(\data)))^2\right] = E\left[(\sum_i q_i(\data)-\sum_i q_i(M_\epsilon(\data)))^2\right]}\\
&=  \sum_i E\left[(q_i(\data) - \max\{0, q_i(\data)+DGeo(\epsilon)\})^2\right]\\
&\phantom{=} + 2\sum_{i,j: i < j} E\left[\Big(q_i(\data) - \max\{0, q_i(\data)+DGeo(\epsilon)\}\Big)\right]E\left[\Big(q_j(\data) - \max\{0, q_j(\data)+DGeo(\epsilon)\}\Big)\right]\\
&\leq d \frac{2}{\epsilon^2} + d(d-1)\frac{2}{\epsilon^2} = d^2\frac{2}{\epsilon^2}
\end{align*}

\textbf{To prove Item \ref{item:zcdppoint}}, we use the same proofs as before, except that $M_\rho$ synthesizes $\widetilde{\data}$ using the noisy answers $a_i=q_i(\data) + \max\{0, DGauss(0, 1/(2\rho))\}$. Following essentially the same calculations, we see that the expected squared error of each point query $q_i$ is at most $1/(2\rho)$ and for the sum query $q_*$ it is at most $d^2/(2\rho)$.

\textbf{To prove Item \ref{item:dpsum}}, let $r_1,\dots, r_d$ be records satisfying the predicates for point queries $q_1,\dots, q_d$, respectively. Let $M^\prime_\epsilon$ be the algorithm that does the following. First, it obtains noisy answers for each query: $a_i = q_i(\data) + Lap(2/\epsilon)$ for $i=1,\dots, d$ and $a_*= q_*(\data) + Lap(2/\epsilon)$. (Since $\Delta_1(q_1,\dots, q_d, q_*)=2$, this clear satisfies $\epsilon$-differential privacy). Next, $M$ solves the following optimization problem:

\begin{align*}
\arg\min_{x_1,\dots, x_d} &\frac{1}{2}\sum_{i=1}^d (x_i-a_i)^2\\
 \text{s.t. } &\sum_{i=1}^d x_i = \max\{0, a_*\}\\
 \phantom{\text{s.t. }} &x_i\geq 0, \text{ for $i=1,\dots, d$}
\end{align*}
and creates a privacy protected microdata $\widetilde{\data}$ that consists of the records $r_1,\dots, r_d$ with respective weights $x_1,\dots, x_d$.

Since the sum query is nonnegative and the problem is constrained so that $\sum_i x_i$ is equal to $\max\{0, a_*\}$, clearly $E\left[(q_*(M^\prime_\epsilon(\data)) - q_*(\data))^2\right]\leq 2/\epsilon^2$.

Now let us derive an upper bound on $E\left[(q_i(M^\prime_\epsilon(\data)) - q_i(\data))^2\right]$ for a point query $q_i$.

For each $i$, let $z_i=a_i-q_i(\data)$ and $z_*= a_*-q_*(\data)$ be the actual noises that are added (they are all i.i.d. Laplace$(2/\epsilon)$.

We know from Lemma \ref{lem:cnnls} that the solution $x_i$ have the form $\max\{a_i-\gamma, 0\}$ (which is $\max\{0, q_i(\data)+z_i-\gamma\}$) for some  $\gamma$ such that $\sum_i \max\{a_i-\gamma, 0\}=\max\{0, a*\}$ and note that the left hand side is monotonic in $\gamma$. 

We first find a suitable upper and lower bound on $\gamma$. Define $L=-|z_*| +  \min_i z_i$ and $U=|z_*|+ \max_i z_i$. Then we have:
\begin{align*}
\sum_i \max\{0, a_i - U\} &= \sum_i \max\{0, q_i(D)+z_i-U\}\leq \sum_i\max\{0, q_i(\data) - |z_*|\}\\
&\leq \max\{0, \left(\sum_i q_i(\data)\right) - |z_*|\}\\
&\text{since the $q_i(\data)$ are nonnegative}\\
&= \max\{0, q_*(\data) - |z_*|\}\\
&\leq \max\{0, a_*\}
\end{align*}
and so $\gamma\leq U$.

Next, 
\begin{align*}
\sum_i \max\{0, a_i - L\} &= \sum_i \max\{0, q_i(D)+z_i-L\}\geq \sum_i\max\{0, q_i(\data) + |z_*|\}\\
&= \sum_i\left( q_i(\data) + |z_*|\right)\\
&\text{since the $q_i(\data)$ are nonnegative}\\
&\geq \left(\sum_i q_i(\data)\right) + |z_*|\\
&=q_*(\data) + |z_*|\geq \max\{0, a_*\}
\end{align*}
and so $\gamma \leq L$.

We next find a bound on $E\left[(q_i(M^\prime_\epsilon(\data)) - q_i(\data))^2\right]$ in terms of $\gamma$.
\begin{align}
E\left[(q_i(M^\prime_\epsilon(\data)) - q_i(\data))^2\right] &= E\left[(\max\{0, q_i(\data)+z_i-\gamma\} - q_i(\data))^2\right]\nonumber\\
&\text{ note the random variable here are $z_i$ and $\gamma$}\nonumber\\
&\leq E\left[(( q_i(\data)+z_i-\gamma) - q_i(\data))^2\right]\nonumber\\
&\text{ since $q_i(\data)$ is nonnegative and removing the max moves the}\nonumber\\
&\text{ left part further away from $q_i(\data)$}\nonumber\\
&=E\left[(z_i-\gamma)^2 \right]\leq E\left[\left(|z_i| + \max\{|L|, |U|\}\right)^2\right]\nonumber\\
&\leq E\left[\left(|z_i| + |z|_* + \max_j |z_j|\right)^2\right]\nonumber\\
&\text{since the noises $z_j$ are symmetric around 0}\nonumber\\
&\leq E\left[\left(|z|_* + 2\max_j |z_j|\right)^2\right]\nonumber\\
&=E\left[z_*^2\right] + 4E\left[|z_*|\right]E\left[\max_j |z_j|\right] + 4E\left[(\max_j |z_j|)^2\right]\label{eq:commonproof}\\
&\in O(\frac{1}{\epsilon^2}\log^2(d)) \quad\text{by Lemma \ref{lem:distributions} for Laplace noise}\nonumber
\end{align}

\textbf{To prove Item \ref{item:zcdpsum}} we follow the same steps as before, but using $N(0, 1/(\rho))$ noise instead of Lap$(2/\epsilon)$ (noting that $\Delta_2(q_1,\dots, q_d, q_*)=\sqrt{2}$) and again see that the variance of the sum query is at most $1/\rho$, while for the point queries, the only thing that changes are the calculations after Equation \ref{eq:commonproof}, where we use the Lemma \ref{lem:distributions} results for Gaussian noise, to conclude that
$E[(q_i(\data) - q_i(M(\data)))^2]\in O(\frac{1}{\rho}\log(d))$ for each $i$.

\textbf{To prove Item \ref{item:appsum}}, we again follow the same steps as before but with a different noise distribution. Recall that the double geometric distribution $DGeo(\epsilon)$ is supported over the integers. If $z\sim DGeo(\epsilon)$ then $P(z=k) = \frac{1-e^{-\epsilon}}{1+e^{-\epsilon}}e^{-\epsilon |k|}$. Furthermore, if $k\geq 0$, $P(z \geq k) = P(z \leq -k) =  \frac{1}{1+e^{-\epsilon}}e^{-\epsilon k}$.

For any integer $B> 0$, the truncated double geometric distribution $TDGeo(\epsilon, B)$ is obtained by clipping a DGeo$(\epsilon)$ at $B$ and $-B$. Specifically, if $z'\sim TDGeo(\epsilon, B)$ then
\begin{align*}
P(z^\prime = k) &=
\begin{cases}
\frac{1}{1+e^{-\epsilon}}e^{-\epsilon B} & \text{ if }k=B\\
\frac{1-e^{-\epsilon}}{1+e^{-\epsilon}}e^{-\epsilon |k|} & \text{ for $k=-B+1, \dots, B-1$}\\
\frac{1}{1+e^{-\epsilon}}e^{-\epsilon B} & \text{ if }k=-B\\
\end{cases}
\end{align*}

So, we follow the same approach as in the proof of Item \ref{item:dpsum} but we use $TDGeo(\epsilon/2, B)$ noise to answer each query (detail queries and sum query). We first determine the value of $B$ needed to satisfy $(\epsilon/2,\delta/2)$-DP.

First note that for any integer $v$, and integer $k\in [v-B+1, v+B-2]$
$$e^{-\epsilon/2}\leq \frac{P(v+TDGeo(\epsilon/2, B)=k)}{P(v-1+TDGeo(\epsilon/2, B)=k)} \leq e^{\epsilon/2}$$
(the significance of these points are that they are not in the boundary of $v+TDGeo$ or $v-1+TDGeo$).

Meanwhile $P(v+TDGeo(\epsilon/2)\in\{v-B, v+B-1, v+B\}) = P(DGeo(\epsilon/2) \geq B-1) + P(DGeo(\epsilon/2) \leq -B)= \frac{1}{1+e^{-\epsilon/2}}e^{-\epsilon B/2} + \frac{1}{1+e^{-\epsilon/2}}e^{-\epsilon (B-1)/2}\leq 2e^{-\epsilon (B-1)/2}$. These are the boundary points where the probability ratios may be large.

Setting this equal to $\delta/2$ (and then performing similar calculations when the $v-1$ term is in the numerator), we see that adding $TDGeo(\epsilon/2, B)$ noise satisfies $\epsilon,\delta$-DP if $B\geq \frac{2}{\epsilon}\log(4/\delta)+1$.

Thus using a naive composition result of approximate differential privacy \cite{diffpbook}, we can add TGeo$(\epsilon/2, B)$ noise to each point query and the sum query to satisfy $(\epsilon,\delta)$-DP.

Using the same postprocessing as in the proof of Item \ref{item:dpsum}, we see that the expected squared error of the sum query (when computed from the postprocessed privacy-protected data) is at most variance$(TDGeo(\epsilon/2, B))\leq$ variance$(DGeo(\epsilon/2))\leq 8/\epsilon^2$.

For the point queries, the only thing that changes are the  calculations after Equation \ref{eq:commonproof}. Since the absolute value of the noises is bounded by $B$, we get that the expected squared error of the point queries is $\in O(B^2)=O(\frac{1}{\epsilon^2}\log^2(1/\delta)+1)$.

\end{proofEnd}

Note that Theorem \ref{thm:upperboundA} matches the lower bounds in Theorem \ref{thm:lowerbound} except for a slight difference for zCDP, where Item \ref{item:zcdppoint} of Theorem \ref{thm:upperboundA} has a $d^2$ while the lower bound in Theorem \ref{thm:lowerbound} has in its place a $d^{2\gamma}$ for any $\gamma$ arbitrarily close to 1.

\section{Algorithms}\label{sec:algorithms}
For tabular data, typically end-users are interested in multiple marginals of the data. Examples include the gender by age marginals at the national, state, and county levels (for constructing age pyramids); the marginal on race at the national, state, county, tract, and block levels both for demographic research and for enforcement of voting rights; total populations in each state, county, etc. (for various funding formulas). Thus these query sets have many different point query/sum query collections embedded in them. Examples include: female population in a county (sum query) and number of females of each age in the county (point queries); or total Asian population (sum query) and Asian population in each county (point queries).
Thus algorithms designed to minimize the appearance of the uncertainty principle should not be designed for a \emph{single} collection of sum/point queries; instead, they should support \emph{many} counting queries.

To describe algorithms, it is helpful to view the dataset $\data$ as a vector $\vec{x}$, where each element $i$ corresponds to a possible record $r_i$. Then $\vec{x}[i]$ is the number of times $r_i$ appears in $\data$. The goal is to produce a privacy-protected version $\widetilde{\vec{x}}$ whose entries are nonnegative real numbers, which can be converted to a positively weighted dataset $\widetilde{\data}$ ($\widetilde{\vec{x}}[i]$ is the weight of record $r_i$ in $\widetilde{\data}$). In this setting, a counting query $q$ is just a vector of 1s and 0s with the same dimensionality as $\vec{x}$, and the query answer is computed as the dot product $q\cdot\vec{x}$.

The algorithms we present here (2 baselines and 2 proposed algorithms) are all based on the idea of first computing noisy query answers and then postprocessing them to obtain $\widetilde{\vec{x}}$. This setup allows an organization to release both $\widetilde{\vec{x}}$ and the noisy answers (for more statistically-oriented end-users). 
Thus, given a set $Q$ of counting queries, for each $q\in Q$, the data collector computes a noisy answer $a_q$ by adding noise with distribution $F_q$ to the true answer and then must postprocess them to create microdata.\footnote{Although a data collector could add noise to a different set of queries and use them to infer the answers to $q\in Q$ \cite{YYZH16,MatrixMech,XiaoDWZK21}, it is the subsequent postprocessing step that would be more important in mitigating the uncertainty principle.} We assume the data collector chooses the noise distributions to achieve their desired privacy definition (e.g., $\epsilon$-DP, $\rho$-zCDP).

\textbf{Baseline: NNLS Postprocessing.} The first baseline we consider is the commonly used nonnegative least squares (NNLS), in which $\widetilde{\vec{x}}$ is produced as the solution to the following optimization problem:
\begin{align*}
\widetilde{\vec{x}}\gets \arg\min_{\widetilde{\vec{x}}} \sum_{q\in Q} \frac{(a_q - q\cdot\widetilde{\vec{x}})^2}{variance(F_q)} \text{ s.t. }\widetilde{\vec{x}}[i]\geq 0 \text{ for all $i$} 
\end{align*}

\textbf{Baseline: Max Fitting Postprocessing.} The next baseline is an adaptation of a bilevel optimization approach \cite{Fioretto2020BilevelOF} that was originally used for optimization problems whose parameters are sensitive. The idea here is to find the positively weighted datasets whose query answers minimize the $L_\infty$ distance to the noisy query answers, breaking ties using least squares error:
\begin{align*}
dist\gets &\min_{\widetilde{\vec{x}}} \max_{q\in Q} \frac{|a_q - q\cdot\widetilde{\vec{x}}|}{std(F_q)} \text{ s.t. }\widetilde{\vec{x}}[i]\geq 0 \text{ for all $i$}\\ 
\widetilde{\vec{x}}\gets &\arg\min_{\widetilde{\vec{x}}}  \sum_{q\in Q} \frac{(a_q - q\cdot\widetilde{\vec{x}})^2}{variance(F_q)} \text{ s.t. }\max_{q\in Q} \frac{|a_q - q\cdot\widetilde{\vec{x}}|}{std(F_q)} \leq dist \text{ and }\widetilde{\vec{x}}[i]\geq 0 \text{ for all $i$} 
\end{align*}

\textbf{Sequential Fitting Postprocessing.} Since it is provably not always possible to output microdata that fits the noisy answers well, we propose an approach that prioritize queries. Thus the query set $Q$ is partitioned by the user into query sets $Q_1,\dots, Q_k$. We use the above NNLS approach to fit a vector $\widetilde{\vec{x}}_1$  to the noisy answers of queries in $Q_1$ (highest priority). We then fit  $\widetilde{\vec{x}}_2$ to the noisy answers for queries in $Q_2$ (next highest priority) subject to the constraints that $\widetilde{\vec{x}}_2$ matches $\widetilde{\vec{x}}_1$ on queries in $Q_1$. Then we fit $\widetilde{\vec{x}}_3$ using noisy answers to queries in $Q_3$ while forcing $\widetilde{\vec{x}}_3$ to match $\widetilde{\vec{x}}_2$ on queries in $Q_1$ and $Q_2$, and so on and return the final $\widetilde{\vec{x}}_k$ at the end. The pseudocode is shown in Algorithm \ref{alg:seq}. This algorithm is the one that matches the upper bounds in Theorem \ref{thm:upperboundA} (referred to as $M^\prime_\epsilon$ when the noisy answers $a_q$ use Laplace noise, and $M^\prime_\rho$ for Gaussian noise).

\begin{algorithm}
\caption{Sequential Fitting (Postprocessing)}\label{alg:seq}
\DontPrintSemicolon
\textbf{Input:} Query set $Q$, noisy answers $a_q$ for $q\in Q$ and noise distributions $F_q$ for $q\in Q$.\;
\textbf{Input:} $Q_1,\dots, Q_k$: partition of $Q$ based on query priority.\;
$\widetilde{\vec{x}}_1 \gets \arg\min_{\widetilde{\vec{x}}} \sum_{q\in Q_1} \frac{(a_q - q\cdot\widetilde{\vec{x}})^2}{variance(F_q)} \text{ s.t. }\widetilde{\vec{x}}[i]\geq 0 \text{ for all $i$}$\;
Fit $\gets Q_1$\;
\For{$\ell=2,\dots, k$}{
   $\widetilde{\vec{x}}_\ell \gets \arg\min_{\widetilde{\vec{x}}} \sum_{q\in Q_\ell} \frac{(a_q - q\cdot\widetilde{\vec{x}})^2}{variance(F_q)} \text{ s.t. }\widetilde{\vec{x}}[i]\geq 0 \text{ for all $i$ and } q\cdot\widetilde{\vec{x}} = q\cdot\widetilde{\vec{x}}_{\ell-1}$ for all $q\in $ Fit\;
   Fit $\gets $ Fit $\cup Q_\ell$\;
}
\textbf{Return:} $\widetilde{\vec{x}}_k$
\end{algorithm}

\textbf{Remark.} The constrained optimizations in max fitting and sequential fitting are difficult for quadratic program optimizers, often resulting in numerical errors, slow convergence, and infeasibility errors (due to occasional insufficient solution quality in earlier stages of the multistage optimization). They require significant engineering effort, tuning of slack parameters (slightly relaxing equality and inequality constraints) and optimizer-specific parameters. So, an ideal solution would also avoid constraints other than nonnegativity for point queries. This is a rationale for our next  method.

\textbf{ReWeighted Fitting Postprocessing.} This method (shown in Algorithm \ref{alg:weighted}) avoids constraints as much as possible in an eventual NNLS solve (Line \ref{line:reweightnnls}) but is limited to query sets of the form $Q=\bigcup_{i=1}^k Q_i$, where the queries inside each $Q_i$ are disjoint and have the same noise distribution. One example is when $Q$ is a collection of marginal queries (e.g., $Q_1=$ marginal on age, $Q_2=$ marginal on age by race, $Q_3=$ marginal on gender by race), which are arguably the most important types of queries. Within each $Q_i$, the algorithm tries to find a cutoff value so that queries with noisy answers above it are likely to have true value that is non-zero (Lines \ref{line:cutA}-\ref{line:cutB}). The idea is that if $n_\dagger$ is the number of queries below the threshold, and if they truly had value 0, then their largest noisy value (i.e., the max of $n_\dagger$ 0-mean Laplace or Gaussian random variables) should not be near the cutoff with high probability (controlled by the confidence parameter $\gamma$). The ``low'' queries are the ones with noisy answers below the cutoff. The algorithm uses the \underline{existing} noisy answers to estimate the sum of these ``low'' queries (Lines \ref{line:sumA}-\ref{line:sumB}) and adds that ``low query sum'' (Line \ref{line:addsum}) to the nonnegative least squares optimization while downweighting the individual low queries (Line \ref{line:qsel}, the downweight depends on the extreme value distribution of the max of $n_\dagger$ $0$-mean Laplace or Gaussian random variables, Line \ref{line:downweight}).
To avoid double counting, both places where a ``low'' query is used (individually and as part of a sum) have their weights cut in half. Note the algorithm only uses existing noisy answers and has no access to the true data.


\begin{algorithm}
\caption{ReWeighted Fitting (Postprocessing)}\label{alg:weighted}
\DontPrintSemicolon
\textbf{Input:} Query set $Q=\bigcup_{i=1}^k Q_i$; Within a $Q_i$, the queries are disjoint. $F_i$ is the noise distribution of each query in $Q_i$. Given noisy answers $a_q$ for $q\in Q$ that satisfy the chosen privacy definition.\;
\textbf{Input:} Confidence parameter $\gamma$ close to 1 (e.g., $0.99$, the setting used in experiments)\;
$S\gets\emptyset$
\For{$i=1,\dots, k$}{
  $a_{(1)}, a_{(2)}, \dots $ are the given noisy answers (to queries in $Q_i$) arranged in sorted order\;
  $j^* \gets$ smallest $j$ s.t. $P$(\text{max($j$ fresh random variable with distribution$F_i)\geq a_{(j)}$})$\leq 1-\gamma$\;\label{line:cutA}
  $cutoff$ $\gets a_{(j^*)}$.\;\label{line:cutB}
  $downweight$ $\gets$ median of distribution of  max of $j$ random variables sampled from $F_i$\;\label{line:downweight}
  For each query $q\in Q_i$ whose noisy answer $a_q$ is $\geq$ $cutoff$, add $(q, a_q, 1/var(F_i))$ to $S$.\;
   For each query $q\in Q_i$ whose  $a_q$ is $<$ $cutoff$, add $(q, a_q, \frac{1}{2*var(F_i)*downweight^2})$ to $S$.\label{line:qsel}\;
   $n^\dagger_i \gets $ number of queries selected in Line 9 (i.e., their noisy answers were $<cutoff$)\;
   $q_\dagger\gets $ sum of queries selected in Line
   \ref{line:qsel}\;  \label{line:sumA}
   $a_\dagger \gets$ sum of their existing noisy answers\;\label{line:sumB}
  Add $(q_\dagger, a_\dagger, \frac{1}{2*n^\dagger_i var(F_i)})$ to $S$\;\label{line:addsum}
}
$\widetilde{\vec{x}}\gets \arg\min_{\widetilde{\vec{x}}} \sum_{(q^\prime,a^\prime,w^\prime)\in S} w^\prime (q^\prime(\widetilde{\vec{x}}) - a^\prime)^2$ s.t., $\widetilde{\vec{x}}[i]\geq 0$ for all $i$.\;\label{line:reweightnnls}
\textbf{Return:} $\widetilde{\vec{x}}$
\end{algorithm}


\section{Experiments}\label{sec:experiments}
To make our code fully open source, we wrote it in Julia \cite{bezanson2017julia} and after trying several open-source optimizers, we settled on COSMO \cite{cosmo}.
We created a collection of benchmark datasets that were small enough to permit running the postprocessing algorithms thousands of times on each dataset (to estimate expected errors) but large enough to demonstrate the uncertainty principle. The full benchmark of 15 real datasets and 16 synthetic datasets is described in the supplementary material.\footnote{See \url{https://github.com/uscensusbureau/CostOfMicrodataNeurIPS2021} for the code and data.} Here we present results for an interesting subset. The only synthetic dataset discussed here, called Level00-2d, is a $10\times 10$ histogram where one element is large (i.e., 10,000) and the others are 0. The other 15 datasets we discuss here were taken from the 2016 ACS Public-Use Microdata Sample \cite{2016pums}. Each represents a $9\times 24$ ``race by Hispanic origin'' histogram from Public-Use Microdata Areas that were considered outliers in their states in terms of racial composition.

For these datasets, we applied the Laplace mechanism with $\epsilon=0.5$ to answer the sum query, both 1-way marginal queries, and identity queries (for each cell, how many people are in it). This is also the priority order used by Sequential Fitting. Error results for the marginals, other privacy parameters and zCDP results can be found in the supplementary material. We ran the Laplace mechanism using  different postprocessing strategies (described in Section \ref{sec:algorithms}) 1,000 times for each dataset to estimate expected squared error of each query. We added an ordinary least squares (OLS) optimization for comparison purposes (OLS is NNLS without nonnegativity constraints). OLS is free from the uncertainty principle because it does not produce positively weighted microdata. Thus, to minimize the effect of the uncertainty principle, the other postprocessing methods should try to achieve errors that are not much worse than OLS. We note that the multi-stage optimization in Max and Sequential fitting are generally very difficult for optimization software, so we only kept those runs in which the optimizer succeeded (thus results for Max and Sequential Fitting are slightly optimistically biased). 

\begin{table}
\begin{tabular}{|c|l|rrrrr|}
\multicolumn{1}{c}{\textbf{Dataset Nickname}}
&\multicolumn{1}{c}{\textbf{Dataset}}
 & \multicolumn{1}{c}{\textbf{OLS}}
 & \multicolumn{1}{c}{\textbf{NNLS}}
 & \multicolumn{1}{c}{\textbf{MaxFit}}
 & \multicolumn{1}{c}{\textbf{Seq}}
 & \multicolumn{1}{c}{\textbf{ReWeight}}
\\\hline 
$\data$01 &
Level00-2d
 & 101.3 & 461.9 & 533.9 & 149.2 & 108.5\\ 
$\data$02 &
PUMA0101301
 & 107.2 & 547.2 & 500.3 & 106.7 & 112.5\\ 
$\data$03 &
PUMA0800803
 & 107.2 & 446.1 & 571.7 & 120.3 & 107.2\\ 
$\data$04 &
PUMA1304600
 & 107.2 & 408.1 & 426.3 & 120.8 & 109.8\\ 
$\data$05 &
PUMA1703529
 & 107.2 & 435.3 & 426.3 & 134.9 & 110.9\\ 
$\data$06 &
PUMA1703531
 & 107.2 & 584.0 & 677.4 & 111.4 & 108.1\\ 
$\data$07 &
PUMA1901700
 & 107.2 & 395.1 & 443.6 & 119.1 & 110.4\\ 
$\data$08 &
PUMA2401004
 & 107.2 & 369.3 & 329.0 & 109.6 & 107.5\\ 
$\data$09 &
PUMA2602702
 & 107.2 & 467.8 & 472.0 & 146.0 & 109.2\\ 
$\data$10 &
PUMA2801100
 & 107.2 & 543.7 & 558.2 & 117.8 & 110.8\\ 
$\data$11 &
PUMA2901901
 & 107.2 & 485.2 & 464.4 & 126.5 & 110.8\\ 
$\data$12 & 
PUMA3200405
 & 107.2 & 329.1 & 301.0 & 122.9 & 108.4\\ 
$\data$13 &
PUMA3603710
 & 107.2 & 300.3 & 293.3 & 85.7 & 108.8\\ 
$\data$14 &
PUMA3604010
 & 107.2 & 399.9 & 386.5 & 129.8 & 111.3\\ 
$\data$15 &
PUMA5101301
 & 107.2 & 396.1 & 369.5 & 139.2 & 107.2\\ 
$\data$16 &
PUMA5151255
 & 107.2 & 330.7 & 280.3 & 139.1 & 107.8\\ 
\hline\end{tabular}
\caption{Squared Error for Sum Query (overall $\epsilon=0.5)$)}\label{table:sum}
\end{table}

In Table \ref{table:sum}, we show the squared error of these postprocessing methods for the sum query. The NNLS and MaxFitting baselines perform poorly for this query, with errors typically 4-5x those of the OLS method (which is close to the variance of the original noisy answer to the sum query). Meanwhile Sequential and ReWeighted fitting perform much better. Standard errors were roughly 2-6\% of the reported metrics (omitted for space, but shown in the supplementary materials).

\begin{table}
\begin{tabular}{|l|rr|rr|rr|rr|rr|}
\multicolumn{1}{c}{}
 & \multicolumn{2}{c}{\textbf{OLS}}
 & \multicolumn{2}{c}{\textbf{NNLS}}
 & \multicolumn{2}{c}{\textbf{MaxFit}}
 & \multicolumn{2}{c}{\textbf{Seq}}
 & \multicolumn{2}{c}{\textbf{ReWeight}}
\\ 
\multicolumn{1}{c|}{\textbf{Data}}& Total & Max & Total & Max & Total & Max & Total & Max & Total & Max \\\hline
$\data$01
 & 10516.5 & 124.0  & 344.2 & 147.4  & 443.7 & 173.1  & 437.3 & 283.6  & 159.2 & 78.4 \\ 
$\data$02
 & 23906.2 & 142.9  & 809.0 & 135.6  & 910.7 & 144.6  & 782.9 & 179.7  & 731.3 & 209.8 \\ 
$\data$03
 & 23906.2 & 142.9  & 1179.8 & 107.5  & 1235.3 & 125.4  & 1171.7 & 189.8  & 1123.8 & 141.9 \\ 
$\data$04
 & 23906.2 & 142.9  & 1313.0 & 111.9  & 1385.5 & 142.3  & 1049.1 & 126.3  & 1264.4 & 136.0 \\ 
$\data$05
 & 23906.2 & 142.9  & 1243.8 & 105.3  & 1257.2 & 96.7  & 1019.1 & 114.3  & 1285.7 & 160.5 \\ 
$\data$06
 & 23906.2 & 142.9  & 562.2 & 94.9  & 599.0 & 72.1  & 429.9 & 112.9  & 409.8 & 78.8 \\ 
$\data$07
 & 23906.2 & 142.9  & 1516.1 & 115.9  & 1665.9 & 129.7  & 1312.1 & 156.9  & 1617.1 & 205.0 \\ 
$\data$08
 & 23906.2 & 142.9  & 1954.4 & 130.0  & 1971.8 & 147.8  & 1983.4 & 311.3  & 1760.1 & 168.9 \\ 
$\data$09
 & 23906.2 & 142.9  & 977.2 & 100.0  & 956.4 & 109.4  & 843.4 & 121.7  & 930.1 & 156.2 \\ 
$\data$10
 & 23906.2 & 142.9  & 686.9 & 97.5  & 705.7 & 79.0  & 534.2 & 92.7  & 516.0 & 78.7 \\ 
$\data$11
 & 23906.2 & 142.9  & 944.4 & 100.4  & 919.2 & 103.2  & 809.4 & 131.6  & 888.2 & 138.2 \\ 
$\data$12
 & 23906.2 & 142.9  & 2189.2 & 119.6  & 2191.5 & 134.7  & 1918.5 & 142.3  & 2336.1 & 259.1 \\ 
$\data$13
 & 23906.2 & 142.9  & 2884.1 & 119.2  & 3088.6 & 149.1  & 2484.2 & 140.7  & 2870.4 & 166.1 \\ 
$\data$14
 & 23906.2 & 142.9  & 1432.5 & 105.9  & 1442.3 & 120.6  & 1262.1 & 122.7  & 1448.6 & 194.0 \\ 
$\data$15
 & 23906.2 & 142.9  & 1474.7 & 108.3  & 1498.6 & 101.8  & 1394.5 & 203.4  & 1392.9 & 153.2 \\ 
$\data$16
 & 23906.2 & 142.9  & 2239.7 & 130.3  & 2274.1 & 124.3  & 2079.0 & 178.5  & 2123.0 & 172.8 \\ 
\hline\end{tabular}
\caption{Squared Errors Id Query (overall $\epsilon=0.5$).}\label{table:id}
\end{table}

For Table \ref{table:id} we examine the expected errors of each cell query (i.e., $q_i$ is the number of people in cell $i$). We find the cell with the largest expected error and report it (the ``Max'' column). We also find the total squared error of the cell queries and report them in the ``Total'' column. Again, the standard errors are roughly 2-6\% of the reported metrics, except that they are sometimes higher for Max and Sequential fitting since averages were only computing on the subset of runs for which the optimizer did not fail.

Generally, NNLS performed slightly better in terms of the maximum expected error compared to ReWeight, although their total errors are comparable and ReWeight significantly outperforms NNLS on the sum query.

Overall, these experiments and our supplementary material show that both ReWeight and Sequential fitting (though not perfect) avoid incidents where there are extremely high errors (unlike NNLS and Max Fitting for sum queries), and this is important in practice. ReWeight and Sequential fitting have similar performance. ReWeight is faster while Sequential needs significant tuning of optimizers in order to succeed. However, one advantage of Sequential is its algorithmic transparency -- it can directly prioritize queries for the tradeoffs caused by the uncertainty principle (in our experiments, the sum query had highest priority for Sequential Fitting).

\section{Related Work}\label{sec:related}
The requirement to produce microdata is an example of consistency in privacy-preserving query answering.
A variety of work \cite{barakCDKMT07:holistic,consistency,hbtree,lioptim,DPSD,dpcube,LMHMR2015,JaewooKDD2015,mwem} has shown that creation of a privacy-preserving data synopsis from which all queries are answered can improve query accuracy under a variety of metrics such as maximum simultaneous error and total error. However, it is known that the production of privacy-preserving microdata comes at the expense of increased computational cost \cite{Vadhan2017,DNT,UV2011}. For example, under standard complexity assumptions \cite{UV2011}, there is no polynomial-time algorithm for generating privacy-protected synthetic data whose two-way marginals are all accurate.

Aside from the computational price, Balcer and Vadhan \cite{Balcer_Vadhan_2019} also recently showed a statistical price of privacy-protected synthetic data. They considered releasing different kinds of privacy-protected representations of nonnegative noisy histograms (for example, releasing the top-k noisy cells under $\epsilon$-DP had a $\log^2(d/k)$ penalty term for squared error), but assumed the value of the sum query was publicly known in their work. Our constructions are based on their proof techniques (see discussion after Theorem \ref{thm:lowerbound}). 


\section{Conclusions, Future Work, and Broader Impact}\label{sec:conc}

Public-use data have many different end-users, so a single aggregated performance measure, such as total error across all queries, is not a reliable measure of data quality. The accuracy of each query is important, which implies multiple conflicting quality criteria for public-use data. Thus an important direction for future work is to identify all tradeoffs in privacy-preserving microdata as well as algorithms with provable guarantees on instance-optimality (i.e., improve performance on datasets that do not trigger the uncertainty principles). 

\subsection*{Broader Impact}
The uncertainty principle presented in this paper (as well as the cost of microdata results in \cite{Balcer_Vadhan_2019}) along with the known computational price of generating microdata suggests that organizations should also consider alternative formats for their privacy-protected data products. The uncertainty principle can be avoided by releasing noisy query answers that are allowed to be negative or by producing weighted datasets that can feature negative weights (however, adding a sparsity requirement could re-introduce systematic errors \cite{Balcer_Vadhan_2019}). 
Such alternative formats may also require educating and providing training materials to end-users. If an organization nevertheless decides to produce privacy-protected microdata, then microdata-generating algorithms should be designed as postprocessing algorithms that convert unbiased noisy measurements into microdata (so that the ``statistics-friendly'' noisy measurements can also be released and studied by data scientists). Further research into such postprocessing algorithms is needed to mitigate the effects of the uncertainty principle.

\begin{ack}
We thank Salil Vadhan for helpful discussions that allowed us to sharpen the lower bound results. Affiliations are provided solely for the purpose of identification. All work was performed under the supervision of the U.S. Census Bureau as part of the authors' employment or contractual work product. The views and opinions in this article are those of the authors and do not represent the policy or official position of the U.S. Government, the U.S. Department of Commerce, the U.S. Census Bureau, the U.S. Department of Homeland Security, Knexus, or Tumult Labs. 

\vspace{0.1cm}
\noindent\textbf{Competing interests:} None.

\vspace{0.1cm}
\noindent\textbf{Additional revenues related to this work:} None.

\end{ack}

\bibliographystyle{plain}

\begin{thebibliography}{10}

\bibitem{abowd18kdd}
John~M. Abowd.
\newblock The {U}.{S}. {C}ensus {B}ureau adopts differential privacy.
\newblock In {\em Proceedings of the 24th ACM SIGKDD International Conference
  on Knowledge Discovery and Data Mining}, KDD 2018, pages 2867--2867, New
  York, NY, USA, 2018. ACM.
\newblock \url{https://dl.acm.org/doi/10.1145/3219819.3226070}.

\bibitem{alcks20}
Shahab Asoodeh, Jiachun Liao, Flavio~P. Calmon, Oliver Kosut, and Lalitha
  Sankar.
\newblock A better bound gives a hundred rounds: Enhanced privacy guarantees
  via f-divergences.
\newblock In {\em 2020 IEEE International Symposium on Information Theory
  (ISIT)}, 2020.

\bibitem{Balcer_Vadhan_2019}
Victor Balcer and Salil Vadhan.
\newblock Differential privacy on finite computers.
\newblock {\em Journal of Privacy and Confidentiality}, 9(2), Sep. 2019.

\bibitem{barakCDKMT07:holistic}
Boaz Barak, Kamalika Chaudhuri, Cynthia Dwork, Satyen Kale, Frank McSherry, and
  Kunal Talwar.
\newblock Privacy, accuracy and consistency too: A holistic solution to
  contingency table release.
\newblock In {\em Proceedings of the ACM SIGACT-SIGMOD-SIGART Symposium on
  Principles of Database Systems (PODS)}, 2007.

\bibitem{bezanson2017julia}
Jeff Bezanson, Alan Edelman, Stefan Karpinski, and Viral~B Shah.
\newblock Julia: A fresh approach to numerical computing.
\newblock {\em SIAM review}, 59(1):65--98, 2017.

\bibitem{prochlo}
Andrea Bittau, \'{U}lfar Erlingsson, Petros Maniatis, Ilya Mironov, Ananth
  Raghunathan, David Lie, Mitch Rudominer, Ushasree Kode, Julien Tinnes, and
  Bernhard Seefeld.
\newblock Prochlo: Strong privacy for analytics in the crowd.
\newblock In {\em Proceedings of the 26th Symposium on Operating Systems
  Principles}, SOSP '17, pages 441--459, New York, NY, USA, 2017. ACM.

\bibitem{BLR08}
Avrim Blum, Katrina Ligett, and Aaron Roth.
\newblock A learning theory approach to non-interactive database privacy.
\newblock In {\em Proceedings of the Fortieth Annual ACM Symposium on Theory of
  Computing}, 2008.

\bibitem{zcdp}
Mark Bun and Thomas Steinke.
\newblock Concentrated differential privacy: Simplifications, extensions, and
  lower bounds.
\newblock In {\em Theory of Cryptography}, 2016.

\bibitem{dgm}
Clément~L. Canonne, Gautam Kamath, and Thomas Steinke.
\newblock The discrete gaussian for differential privacy.
\newblock In {\em NeurIPS}, 2020.

\bibitem{onthemap}
{U.S.} {C}ensus {B}ureau.
\newblock {LEHD Origin-Destination Employment Statistics (2002-2018) [computer
  file]}, 2021.
\newblock U.S. Census Bureau, Longitudinal-Employer Household Dynamics Program
  [distributor], accessed on October 11, 2021 at
  \url{https://onthemap.ces.census.gov}. LODES 7.5 [version].

\bibitem{DPSD}
Graham Cormode, Cecilia Procopiuc, Divesh Srivastava, Entong Shen, and Ting Yu.
\newblock Differentially private spatial decompositions.
\newblock In {\em ICDE}, 2012.

\bibitem{cover}
Thomas~M. Cover and Joy~A. Thomas.
\newblock {\em Elements of Information Theory (Wiley Series in
  Telecommunications and Signal Processing)}.
\newblock Wiley-Interscience, USA, 2006.

\bibitem{DingKY17}
Bolin Ding, Janardhan Kulkarni, and Sergey Yekhanin.
\newblock Collecting telemetry data privately.
\newblock In {\em Proceedings of the 31st International Conference on Neural
  Information Processing Systems}, NIPS'17, pages 3574--3583, USA, 2017. Curran
  Associates Inc.

\bibitem{dpcube}
Bolin Ding, Marianne Winslett, Jiawei Han, and Zhenhui Li.
\newblock Differentially private data cubes: Optimizing noise sources and
  consistency.
\newblock In {\em Proceedings of the 2011 ACM SIGMOD International Conference
  on Management of Data}, 2011.

\bibitem{ourdata}
Cynthia Dwork, Krishnaram Kenthapadi, Frank McSherry, Ilya Mironov, and Moni
  Naor.
\newblock Our data, ourselves: Privacy via distributed noise generation.
\newblock In Serge Vaudenay, editor, {\em Advances in Cryptology - EUROCRYPT
  2006}, 2006.

\bibitem{dwork06Calibrating}
Cynthia Dwork, Frank McSherry, Kobbi Nissim, and Adam Smith.
\newblock Calibrating noise to sensitivity in private data analysis.
\newblock In {\em Theory of cryptography conference}, pages 265--284. Springer,
  2006.

\bibitem{DNT}
Cynthia Dwork, Aleksandar Nikolov, and Kunal Talwar.
\newblock Efficient algorithms for privately releasing marginals via convex
  relaxations.
\newblock {\em Discrete Comput. Geom.}, 53(3):650–673, 2015.

\bibitem{diffpbook}
Cynthia Dwork and Aaron Roth.
\newblock The algorithmic foundations of differential privacy.
\newblock {\em Foundations and Trends in Theoretical Computer Science},
  9(3–4):211--407, 2014.

\bibitem{rappor}
\'{U}lfar Erlingsson, Vasyl Pihur, and Aleksandra Korolova.
\newblock Rappor: Randomized aggregatable privacy-preserving ordinal response.
\newblock In {\em Proceedings of the 2014 ACM SIGSAC Conference on Computer and
  Communications Security}, CCS '14, pages 1054--1067, New York, NY, USA, 2014.
  ACM.

\bibitem{issues}
Simson~L. Garfinkel, John~M. Abowd, and Sarah Powazek.
\newblock Issues encountered deploying differential privacy.
\newblock In {\em Proceedings of the 2018 Workshop on Privacy in the Electronic
  Society}, 2018.

\bibitem{cosmo}
Michael Garstka, Mark Cannon, and Paul Goulart.
\newblock {COSMO}: A conic operator splitting method for large convex problems.
\newblock In {\em European Control Conference}, 2019.

\bibitem{GRS09}
Arpita Ghosh, Tim Roughgarden, and Mukund Sundararajan.
\newblock Universally utility-maximizing privacy mechanisms.
\newblock In {\em Proceedings of the Forty-First Annual ACM Symposium on Theory
  of Computing}, 2009.

\bibitem{Haney:2017:UCF}
Samuel Haney, Ashwin Machanavajjhala, John~M. Abowd, Matthew Graham, Mark
  Kutzbach, and Lars Vilhuber.
\newblock Utility cost of formal privacy for releasing national
  employer-employee statistics.
\newblock In {\em Proceedings of the 2017 ACM International Conference on
  Management of Data}, SIGMOD '17, pages 1339--1354, New York, NY, USA, 2017.
  ACM.

\bibitem{mwem}
Moritz Hardt, Katrina Ligett, and Frank McSherry.
\newblock A simple and practical algorithm for differentially private data
  release.
\newblock In {\em NIPS}, 2012.

\bibitem{consistency}
Michael Hay, Vibhor Rastogi, Gerome Miklau, and Dan Suciu.
\newblock Boosting the accuracy of differentially private histograms through
  consistency.
\newblock {\em Proc. VLDB Endow.}, 3(1–2):1021–1032, September 2010.

\bibitem{elasticsensitivity}
Noah Johnson, Joseph~P Near, and Dawn Song.
\newblock Towards practical differential privacy for sql queries.
\newblock {\em Proceedings of the VLDB Endowment}, 11(5):526--539, 2018.

\bibitem{JaewooKDD2015}
Jaewoo Lee, Yue Wang, and Daniel Kifer.
\newblock Maximum likelihood postprocessing for differential privacy under
  consistency constraints.
\newblock In {\em Proceedings of the 21th ACM SIGKDD International Conference
  on Knowledge Discovery and Data Mining (KDD)}, 2015.

\bibitem{lioptim}
Chao Li, Michael Hay, Vibhor Rastogi, Gerome Miklau, and Andrew McGregor.
\newblock Optimizing linear counting queries under differential privacy.
\newblock In {\em PODS}, 2010.

\bibitem{MatrixMech}
Chao Li, Michael Hay, Vibhor Rastogi, Gerome Miklau, and Andrew McGregor.
\newblock Optimizing linear counting queries under differential privacy.
\newblock In {\em Proceedings of the Twenty-Ninth ACM SIGMOD-SIGACT-SIGART
  Symposium on Principles of Database Systems}, 2010.

\bibitem{LMHMR2015}
Chao Li, Gerome Miklau, Michael Hay, Andrew Mcgregor, and Vibhor Rastogi.
\newblock The matrix mechanism: Optimizing linear counting queries under
  differential privacy.
\newblock {\em The VLDB Journal}, 24(6):757–781, 2015.

\bibitem{ashwin08:map}
A.~{Machanavajjhala}, D.~{Kifer}, J.~{Abowd}, J.~{Gehrke}, and L.~{Vilhuber}.
\newblock Privacy: Theory meets practice on the map.
\newblock In {\em 2008 IEEE 24th International Conference on Data Engineering},
  pages 277--286, Piscataway, NJ, USA, April 2008. IEEE.

\bibitem{Fioretto2020BilevelOF}
T.~W. Mak, Ferdinando Fioretto, and P.~V. Hentenryck.
\newblock Bilevel optimization for differentially private optimization.
\newblock {\em ArXiv}, abs/2001.09508, 2020.

\bibitem{fburl}
Solomon Messing, Christina DeGregorio, Bennett Hillenbrand, Gary King, Saurav
  Mahanti, Zagreb Mukerjee, Chaya Nayak, Nate Persily, Bogdan State, and Arjun
  Wilkins.
\newblock {Facebook Privacy-Protected Full URLs Data Set}, 2020.

\bibitem{cnstatworkshop11}
{National Academies of Sciences, Engineering, and Medicine}.
\newblock Chapter 11: Census bureau's responses and own analyses of 2010
  demonstration data products.
\newblock In {\em Proceedings of the 2020 CNSTAT Workhop on Census Data
  Products: Data Needs and Privacy Considerations}. The National Academies
  Press, 2020.

\bibitem{hbtree}
Wahbeh Qardaji, Weining Yang, and Ninghui Li.
\newblock Understanding hierarchical methods for differentially private
  histograms.
\newblock 2013.

\bibitem{NISTDesign}
D.~Ridgeway, M.~Theofanos, T.~Manley, and C.~Task.
\newblock {Challenge Design and Lessons Learned from the 2018 Differential
  Privacy Challenges, National Institute of Standards and Technology Technical
  Note.} \url{https://doi.org/10.6028/NIST.TN.2151}, 2021.

\bibitem{applediffp}
Apple Differential~Privacy Team.
\newblock Learning with privacy at scale, 2017.

\bibitem{UV2011}
Jonathan Ullman and Salil Vadhan.
\newblock Pcps and the hardness of generating private synthetic data.
\newblock In {\em Proceedings of the 8th Conference on Theory of Cryptography},
  TCC’11, page 400–416, Berlin, Heidelberg, 2011. Springer-Verlag.

\bibitem{2016pums}
{U.S. Census Bureau}.
\newblock 2016 american community surve public use microdata samples (pums).
\newblock
  \url{https://www.census.gov/programs-surveys/acs/microdata/access.2016.html}.

\bibitem{census2010}
{U.S. Census Bureau}.
\newblock Decennial census: 2010 summary files.
\newblock \url{https://www.census.gov/mp/www/cat/decennial_census_2010/}.

\bibitem{Vadhan2017}
Salil Vadhan.
\newblock {\em The Complexity of Differential Privacy}, pages 347--450.
\newblock Springer International Publishing, 2017.

\bibitem{XiaoDWZK21}
Yingtai Xiao, Zeyu Ding, Yuxin Wang, Danfeng Zhang, and Daniel Kifer.
\newblock Optimizing fitness-for-use of differentially private linear queries.
\newblock {\em Proc. {VLDB} Endow.}, 14(10), 2021.

\bibitem{YYZH16}
Ganzhao Yuan, Yin Yang, Zhenjie Zhang, and Zhifeng Hao.
\newblock Convex optimization for linear query processing under approximate
  differential privacy.
\newblock In {\em Proceedings of the 22nd ACM SIGKDD International Conference
  on Knowledge Discovery and Data Mining}, 2016.

\end{thebibliography}


\clearpage


\appendix

\section{Appendix (Supplementary Material)}

\subsection{Proofs Lower Bound Results}
\printProofs[lowerbound]

\newpage

\subsection{Proof of Upper Bound Results}
We first need some facts about Gaussian and Laplace random variables.
\begin{lemma}\label{lem:distributions}
Let $z_1,\dots, z_d$ be i.i.d. random variables from a distribution $F$.
\begin{itemize}
\item If $F$ is $N(0, \sigma^2)$ then
\begin{itemize}
\item $E[z_i^2]=\sigma^2$ for all $i$
\item $E[|z_i|]\leq \sigma$ for all $i$
\item $E[\max_i |z_i|]\in O(\sigma\sqrt{\log(d)}) $
\item $E[\max_i z_i^2] \in O(\sigma^2\log(d)) $
\end{itemize}
\item If $F$ is $Lap(1/\epsilon)$ then
\begin{itemize}
\item $E[z_i^2]=2/\epsilon^2$ for all $i$
\item $E[|z_i|]=1/\epsilon$ for all $i$
\item $E[\max_i |z_i|] \leq\frac{1}{\epsilon}(\ln(d) + 1)$
\item $E[\max_i z_i^2] \leq \frac{1}{\epsilon^2}(\ln^2(d) + 2\ln(d) + 2)$
\end{itemize}
\end{itemize}
\end{lemma}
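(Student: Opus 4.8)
The plan is to treat the two single-variable moments in each list as direct computations from the defining densities, and then to derive all four maximal bounds from one unified tool: the layer-cake identity $E[Y]=\int_0^\infty P(Y>t)\,dt$ combined with the union bound $P(\max_i|z_i|>t)\le \min\{1,\,d\,P(|z_1|>t)\}$, splitting each integral at the threshold $t_0$ where $d\,P(|z_1|>t_0)=1$.

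First I would record the elementary moments. For $z\sim N(0,\sigma^2)$, $E[z^2]=\sigma^2$ is the definition of variance, and $E[|z|]=\sigma\sqrt{2/\pi}\le\sigma$ is the half-normal mean. For $z\sim Lap(1/\epsilon)$ the density is $\tfrac{\epsilon}{2}e^{-\epsilon|x|}$, so $E[|z|]=1/\epsilon$ and $E[z^2]=2/\epsilon^2$ are the standard Laplace absolute mean and variance. Each of these is obtained by evaluating a single defining integral.

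The heart of the argument is the Laplace maximum, where the tail is exactly $P(|z_1|>t)=e^{-\epsilon t}$ for $t\ge 0$. Writing $W=\max_i|z_i|$ and splitting the layer-cake integral at $t_0=\tfrac{1}{\epsilon}\ln d$ gives
\[
E[W]\le t_0+\int_{t_0}^\infty d\,e^{-\epsilon t}\,dt=\frac{\ln d}{\epsilon}+\frac{1}{\epsilon}=\frac{\ln d+1}{\epsilon}.
\]
For $E[W^2]$ I would apply layer-cake to $W^2$, i.e.\ integrate $2t\,P(W>t)$, split at the same $t_0$, and evaluate $\int_{t_0}^\infty 2t\,d\,e^{-\epsilon t}\,dt$ by one integration by parts. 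The factor $d$ cancels against $e^{-\epsilon t_0}=1/d$, and the surviving terms $t_0^2$, $\tfrac{2t_0}{\epsilon}$, and $\tfrac{2}{\epsilon^2}$ assemble exactly into $\tfrac{1}{\epsilon^2}(\ln^2 d+2\ln d+2)$.

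For the Gaussian maxima only order-of-magnitude bounds are claimed, so I would use the sub-Gaussian tail $P(|z_1|>t)\le 2e^{-t^2/(2\sigma^2)}$. Either a Chernoff/Jensen estimate, bounding $E[W]\le \tfrac{1}{\lambda}\log\sum_i E[e^{\lambda|z_i|}]$ and optimizing at $\lambda\asymp\sqrt{\log d}/\sigma$, or the same split-integral technique with $t_0\asymp\sigma\sqrt{2\log d}$, yields $E[W]\in O(\sigma\sqrt{\log d})$; integrating the Gaussian tail against $2t$ then gives $E[W^2]\in O(\sigma^2\log d)$. The main obstacle is bookkeeping rather than conceptual difficulty: the Laplace $E[W^2]$ bound must land on the stated constants, so the integration by parts and the cancellation of the $d$ factor have to be tracked exactly, whereas the Gaussian estimates are comparatively forgiving since only the asymptotic rate is required.
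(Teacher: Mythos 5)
Your proposal is correct and follows essentially the same route as the paper: the paper likewise derives all four maximum bounds from the layer-cake identity $E[X^p]=\int_0^\infty p t^{p-1}\,(1-G(t))\,dt$ together with the union bound $1-F_+(t)^d\leq d\,(1-F_+(t))$, splitting the integral at $\gamma=\ln(d)$ for the Laplace case and $\gamma=\sqrt{2\ln(d)}$ for the Gaussian case, and your Laplace computations land on its constants $\frac{1}{\epsilon}(\ln(d)+1)$ and $\frac{1}{\epsilon^2}(\ln^2(d)+2\ln(d)+2)$ exactly. The only cosmetic differences are that the paper first rescales to unit-scale variables, bounds the Gaussian $E[|z_i|]$ via Jensen rather than the half-normal mean, and uses the slightly sharper Mills-ratio tail $\frac{2}{t\sqrt{2\pi}}e^{-t^2/2}$ where your sub-Gaussian tail suffices for the stated $O(\cdot)$ claims.
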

\begin{proof}
The variance of a Gaussian is known to be $\sigma^2$ and that of the Laplace distribution is known to be $2/\epsilon^2$. 

The absolute value of a Laplace is an Exponential random variable with rate $\epsilon$ and so the expectation is $1/\epsilon$.
Next, by Jensen's inequality $\left(E[|z_i|]\right)^2\leq E[z_i^2]$ and so $E[|z_i|] \leq \sqrt(E[z_i^2])$. Thus, in the case of a Gaussian, this is upper bounded by $\sigma$.

\textbf{To compute the expectation of the maxes}, we note that if $z^\prime_i$ follows the Lap$(1)$ distribution, then $z^\prime/\epsilon$ follows the Lap$(1/\epsilon)$ and if $z^\prime$ follows $N(0,1)$ then $\sigma z$ follows the $N(0, \sigma^2)$ distribution. Thus we compute the expectations under the assumption that the scale variables are $1$ and then we multiply by $1/\epsilon$ or $\sigma$ for the first moment, and $1/\epsilon^2$ or $\sigma^2$ for the second moment to get the results for $z_i$ from the results for $z^\prime_i$. 

Next we let  $G$ be  the cdf of a continuous nonnegative random variable and $g$ the corresponding pdf. Then for any $p\geq 1$,  
\begin{align*}
E_{X\sim G}\left[X^p\right] &= \int_{0}^\infty x^p g(x)~dx = \int_{0}^\infty g(x)\left(\int_0^\infty p t^{p-1}1_{\{t\leq x\}} ~dt\right)~dx\\
&= \int_{0}^\infty p t^{p-1}\left(\int_0^\infty g(x) 1_{\{t\leq x\}} ~dx\right)~dt = \int_0^\infty pt^{p-1} (1-G(t))~dt
\end{align*}

Now we let $F_+$ be the cdf of $|z^\prime_1|$ (the random variables with location parameter 1), and let $G$ be the distribution of $\max_i |z^\prime_i|$. Then for all $t$, $G(t)=F_+(t)^d$ and also by the union bound, $1-F_+(t)^d = 1-G(t)\leq d(1-F_+(t))$. For any $\gamma > 0$,
\begin{align*}
E\left[\max_i |z^\prime_i|^p\right]&=\int_0^\infty pt^{p-1} (1-F_+(t)^d)~dt\\
&= \int_0^\gamma pt^{p-1} (1-F_+(t)^d)~dt   + \int_\gamma^\infty pt^{p-1} (1-F_+(t)^d)~dt\\
&\leq \int_0^\gamma pt^{p-1}~dt + \int_\gamma^\infty pt^{p-1} d(1-F_+(t))~dt\\
&= \gamma^p + \int_\gamma^\infty pt^{p-1} d(1-F_+(t))~dt
\end{align*}

\textbf{For the Laplace distribution}, $F_+(t)=1-e^{-t}$ thus, for any $\gamma > 0$
\begin{align*}
E\left[\max_i |z^\prime_i|\right] \leq \gamma + \int_\gamma^\infty  de^{-t}~dt = \gamma + de^{-\gamma}\\
E\left[\max_i |z^\prime_i|^2\right] \leq \gamma^2 + \int_\gamma^\infty  2tde^{-t}~dt = \gamma^2 + 2\gamma de^{-\gamma} + 2de^{-\gamma}\\
\end{align*}
Setting $\gamma=\ln(d)$ and converting from $z^\prime_i$ to $z_i$, we get $E[\max_i |z_i|]\leq\frac{1}{\epsilon}(\ln(d) + 1)$ and $E[\max_i |z_i|^2]\leq \frac{1}{\epsilon^2}(\ln^2(d) + 2\ln(d) + 2)$.

\textbf{For the Gaussian distribution}, a well-known tail bound on the Gaussian is that $1-F_+(t)\leq \frac{2}{t}\frac{1}{\sqrt{2\pi}}e^{-t^2/2}$. Thus we get

\begin{align*}
E\left[\max_i |z^\prime_i|\right] &\leq \gamma + \int_\gamma^\infty  d\frac{2}{t}\frac{1}{\sqrt{2\pi}}e^{-t^2/2}~dt \leq \gamma + \frac{2d}{\gamma}\int_\gamma^\infty \frac{1}{\sqrt{2\pi}}e^{-t^2/2}~dt\\
&\leq \gamma + 2d\frac{2}{\gamma^2}\frac{1}{\sqrt{2\pi}}e^{-\gamma^2/2}\\
E\left[\max_i |z^\prime_i|^2\right] &\leq \gamma^2 + \int_\gamma^\infty  2td\frac{2}{t}\frac{1}{\sqrt{2\pi}}e^{-t^2/2}~dt =\gamma^2 + 4d\int_\gamma^\infty \frac{1}{\sqrt{2\pi}}e^{-t^2/2}~dt\\
&= \gamma^2 + 4d\frac{1}{\gamma}\frac{1}{\sqrt{2\pi}}e^{-\gamma^2/2}
\end{align*}
Setting $\gamma=\sqrt{2\ln(d)}$ and converting from $z^\prime_i$ to $z_i$, we get $E[\max_i |z_i|]\leq \sigma(\sqrt{2\ln(d)} + \frac{4}{\sqrt{2\pi}}\frac{1}{2\ln(d)})$ and $E[\max_i |z_i|^2]\leq \sigma^2(2\ln(d) + \frac{4}{\sqrt{2\pi}}\frac{1}{\sqrt{2\ln(d)}})$.
\end{proof}

We next need a technical lemma about the solution to a constrained nonnegative least squares problem.

\begin{lemma}\label{lem:cnnls}
Let $a_1,\dots, a_d$ be real numbers and let $a_*\geq 0$. The solution to the optimization problem
\begin{align*}
\arg\min_{x_1,\dots, x_d} &\frac{1}{2}\sum_{i=1}^d (x_i-a_i)^2\\
 \text{s.t. } &\sum_{i=1}^d x_i = a_*\\
 \phantom{\text{s.t. }} &x_i\geq 0, \text{ for $i=1,\dots, d$}
\end{align*}
is $x_i=\max\{a_i-\gamma, 0\}$ (for all $i$) where $\gamma$ is chosen so that $\sum_{i=1}^d \max\{0, a_i-\gamma\}=a_*$.
\end{lemma}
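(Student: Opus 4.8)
The plan is to treat this as a convex program and certify the claimed solution through the Karush--Kuhn--Tucker (KKT) conditions. First I would observe that the objective $\frac12\sum_i (x_i-a_i)^2$ is strictly convex, the equality constraint $\sum_i x_i = a_*$ is affine, and the feasible region $\{x : \sum_i x_i = a_*,\ x_i \ge 0\}$ is nonempty (this uses $a_* \ge 0$) and compact. Hence a unique minimizer exists, and because all constraints are affine the KKT conditions are both necessary and sufficient for optimality.

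Next I would introduce the Lagrangian with a single multiplier $\gamma$ for the equality constraint and multipliers $\mu_i \ge 0$ for the constraints $-x_i \le 0$, namely $L = \frac12\sum_i(x_i-a_i)^2 + \gamma\bigl(\sum_i x_i - a_*\bigr) - \sum_i \mu_i x_i$. Stationarity gives $x_i = a_i - \gamma + \mu_i$ for each $i$, and complementary slackness $\mu_i x_i = 0$ forces a dichotomy: if $x_i > 0$ then $\mu_i = 0$ and $x_i = a_i - \gamma$ (requiring $a_i > \gamma$); if $x_i = 0$ then $\mu_i = \gamma - a_i \ge 0$ (requiring $a_i \le \gamma$). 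Both cases are captured by the soft-threshold form $x_i = \max\{a_i - \gamma, 0\}$.

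It then remains to pin down $\gamma$ from the equality constraint. I would set $\phi(\gamma) = \sum_i \max\{a_i - \gamma, 0\}$ and note that $\phi$ is continuous and non-increasing, with $\phi(\gamma) \to \infty$ as $\gamma \to -\infty$ and $\phi(\gamma) \to 0$ as $\gamma \to +\infty$. Since $a_* \ge 0$ lies in the range $[0,\infty)$ of $\phi$, the intermediate value theorem yields a $\gamma$ with $\phi(\gamma) = a_*$, which is exactly the condition in the statement. Choosing the multipliers $\mu_i = \max\{\gamma - a_i, 0\} \ge 0$ then simultaneously verifies primal feasibility, dual feasibility, stationarity, and complementary slackness, so by sufficiency of KKT for this convex program the candidate is the unique optimum.

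The computation is routine; the only step needing a little care is the existence of $\gamma$, where the hypothesis $a_* \ge 0$ is precisely what places $a_*$ in the range of $\phi$. An alternative to the KKT route would be a direct exchange argument: show that at any optimum $a_i - x_i = a_j - x_j$ for every pair with $x_i, x_j > 0$ (otherwise moving mass between the two coordinates strictly decreases the objective while preserving $\sum_i x_i = a_*$), and $a_i - x_i \ge a_j - x_j$ whenever $x_i = 0 < x_j$; identifying this common threshold value with $\gamma$ recovers the same formula without invoking Lagrangian duality.
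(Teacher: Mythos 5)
Your proof is correct, but it certifies optimality differently from the paper. The paper never introduces dual variables: it computes the gradient of the objective at the candidate point $x_i=(a_i-\gamma)_+$ and shows directly that any feasible direction $y$ (i.e., $\sum_i y_i=0$, with $y_i\geq 0$ whenever $a_i\leq\gamma$) satisfies $\sum_i y_i\bigl((a_i-\gamma)_+-a_i\bigr)\geq -\gamma\sum_i y_i=0$, contradicting the existence of a descent direction --- a self-contained variational argument, with the existence of $\gamma$ dismissed in one line ``by continuity.'' Your KKT route is the dual-certificate version of the same first-order condition: stationarity plus complementary slackness with $\mu_i=\max\{\gamma-a_i,0\}$ is exactly the inequality the paper verifies by hand, repackaged through the Lagrangian. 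What your approach buys: it explicitly establishes existence and uniqueness of the minimizer (compactness of the simplex plus strict convexity --- points the paper silently assumes), and your treatment of the existence of $\gamma$ via the monotone, continuous map $\phi(\gamma)=\sum_i\max\{a_i-\gamma,0\}$ with limits $\infty$ and $0$ is more careful than the paper's, pinpointing where the hypothesis $a_*\geq 0$ enters. What the paper's approach buys is self-containedness: it needs no appeal to the KKT theorem or a constraint qualification, only the definition of a feasible descent direction (and incidentally its final display has a cosmetic slip, writing an equality where $y_i\geq 0$, $a_i\leq\gamma$ only yield $-\sum_{i:a_i\leq\gamma}y_i a_i\geq-\gamma\sum_{i:a_i\leq\gamma}y_i$, i.e., the chain should end in ``$\geq 0$,'' which still contradicts strict descent). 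Your sketched exchange argument at the end is essentially the paper's descent computation restricted to two-coordinate directions, so either of your two routes closes the gap.
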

\begin{proof}
Let us use the shorthand $(a-\gamma)_+$ to mean $\max\{0, a-\gamma\}$.

First, it is easy to see that by continuity, there exists a $\gamma$ such that $\sum_i (a_i-\gamma)_+=a_*$. 

The gradient of the objective function with respect to the $x_i$ is:
\begin{align*}
\frac{\partial \text{obj}}{\partial x_i} = (x_i-a_i) = (a_i-\gamma)_+-a_i
\end{align*}
and if this choice of $x_i$ is not optimal, then any descent direction $(y_1,\dots, y_n)$ (i.e., for which $x_1+y_1, \dots, x_1+y_2$ is feasible and reduces the objective function) must satisfy (1) $\sum_{i=1}^d y_i = 0$ to maintain feasibility of the equality constraint, (2) $\sum_{i=1}^d y_i ((a_i-\gamma)_+ - a_i) < 0$ to be a descent direction, (3) $y_i\geq 0$ when $a_i \leq \gamma$ and $y_i \geq \gamma-a_i$ when $a_i>\gamma$ to maintain nonnegativity of $x_i+y_i\equiv (a_i-\gamma)_+ + y_i$.

Now,
\begin{align*}
\sum_{i=1}^d y_i((a_i-\gamma)_+ - a_i) &\sum_{i: a_i > \gamma} y_i((a_i-\gamma)_+ - a_i) + \sum_{i: a_i \leq \gamma} y_i((a_i-\gamma)_+ - a_i)\\
&=-\gamma \sum_{i: a_i > \gamma} y_i - \sum_{i: a_i \leq \gamma} y_i a_i\\
&=-\gamma \sum_{i: a_i > \gamma} y_i-\sum_{i: a_i \leq \gamma} y_i\gamma\\
&\quad\text{since feasibility of $x_i+y_i$ requires $y_i\geq 0$ when $a_i\leq \gamma$}\\
&=-\gamma \sum_{i=1}^d y_i =0 \quad\text{since feasibility requires $\sum_i y_i=0$}
\end{align*}
contradicting that $y_1,\dots, y_d$ is a descent direction.

\end{proof}

\printProofs[upperbound]

\newgeometry{margin=1in}
\section{Full Data Benchmark Description}

Our benchmarks contain 15 real datasets and 16 synthetic datasets. The datasets are designed to be small enough to enable thousands of runs (in order to compute expected squared errors) but large enough to clearly illustrate postprocessing errors and present a challenge to many open-source optimizers.

\subsection{Real Datasets}

The real datasets are drawn from the 2016 American Community Survey Public Use Microdata Sample (PUMS) \cite{2016pums}, which provides records for geographies known as Public Use Microdata Areas (PUMA).

To create a benchmark data set that adequately captured the diversity of real world demographic data, we drew from outlier geographies in the 2016 ACS PUMS. We chose 15 Public Use Microdata Areas whose data distributions had been identified as conflicting significantly with the majority distributions in their states, according to the k-marginal metric used by NIST in their Differential Privacy challenge \cite{NISTDesign}.  The data spanned historically redlined areas, a variety of immigrant communities, wealthy and diverse urban neighborhoods, rural agricultural communities, and included every major region in the United States. 

For each of the 15 regions, we created a $9\times 24$ Race by Hispanic Origin histogram. These were two separate questions in the ACS questionnaire. Although the questionnaire allowed respondents to select multiple races (from a list of 15 categories and 3 fill-in text boxes), most individuals belong to three or fewer races, and the 2016 ACS PUMS did not include detailed racial breakdowns for individuals with more than 3 races.  To mimic the extreme sparsity and geographically diverse correlation patterns in the multi-racial checkbox variable, we selected two variables (called RAC1P and HISP; full definitions below): a smaller race variable with 9 possible values which primarily records single races, and a detailed Hispanic origin variable with 24 possible values.  Any of the 216 possible combinations of race and Hispanic origin is valid; individuals of all races have origins from all across Latin America.   However, in any given community the vast majority of these counts will be zero, resulting in sparse distributions.  At the same time, communities with different immigration histories will differ significantly with respect to which counts are nonzero and in the size of the other counts.   Algorithms which performed well across all cases in the PUMS benchmark data set should be expected perform well on the edge case complexities of national data.  

\begin{verbatim}
RAC1P   
    Recoded detailed race code
1. White alone
2. Black or African American alone
3. American Indian alone
4. Alaska Native alone
5. American Indian and Alaska Native tribes specified; or American
 Indian or Alaska Native, not specified and no other races
6. Asian alone
7. Native Hawaiian and Other Pacific Islander alone
8. Some Other Race alone
9. Two or More Races

HISP 
Detailed Hispanic origin
01. Not Spanish/Hispanic/Latino
02. Mexican
03. Puerto Rican
04. Cuban
05. Dominican
06. Costa Rican
07. Guatemalan
08. Honduran
09. Nicaraguan
10. Panamanian
11. Salvadoran
12. Other Central American 
13. Argentinean
14. Bolivian
15. Chilean
16. Colombian
17. Ecuadorian
18. Paraguayan
19. Peruvian
20. Uruguayan
21. Venezuelan
22. Other South American 
23. Spaniard
24. All Other Spanish/Hispanic/Latino
\end{verbatim}

\subsection{Synthetic Data}
The synthetic data are modeled after the proofs of our lower bound results. The main idea is that suppose  noise from a distribution $F$ is added to a histogram, and that there are $k$ zero cells and one cell with a count of $C$ in that histogram.
Based on the noisy cell values, it is difficult to guess which cell had value $C$ when $C$ is smaller than the median of the distribution of $\max\{X_1,\dots, X_k\}$ (whose CDF is $F^k(t)$), where each $X_i\sim F$. Thus we created datasets with sparsity patterns.

Each histogram had 100 elements, from which we created a 1-dimensional version (a 100-element vector) and a 2-dimensional version (reshaping it to a $10\times 10$ histogram). In all of the datasets, the first histogram cell is relatively large (10,000) and should be easy to distinguish from 0 based on the noisy counts (although ordinary nonnegative least squares fails to do so).

The synthetic histograms come from 4 categories, defined as follows:

\begin{itemize}
\item \textbf{Level}. In the \underline{Level \textbf{k}} histograms, all cells have the same value $k$ (except the first, which has value 10,000). The benchmarks include 1- and 2-dimensional versions of \underline{Level0} (i.e., only the first element is nonzero), \underline{Level1}, \underline{Level16}, and \underline{Level32}. The Level1 dataset presents a tricky case where each cell (other than the first), based on its noisy value, may look similar to 0, but the overall sum of these small cells is clearly distinguishable from 0. The Level16 and Level32 datasets are designed to force algorithms to try to estimate the number of cells that are likely to have true value of 0. Note that 16 is roughly the 40th percentile of the distribution of $\max{X_1,\dots, X_{100}}$ when each $X_i$ has the Laplace$(1/\epsilon)$ distribution with $\epsilon=0.25$. So having a few cell noisy cell counts near 16 is possible when a histogram is mostly 0, but having many noisy counts near 16 is a sign that the histogram is not sparse.
\item \textbf{Stair}. The \underline{Stair} data is a histogram that looks like this: $[10000, 1, 2, 3, 4, \dots]$ in one dimension (and is reshaped into a $10\times 10$ matrix in 2 dimensions. It is designed to simulate a dataset with small, medium, and large values.
\item\textbf{Step}. The \underline{Step k} dataset is a step function. The first element is 10000, the next 49 are 0 and the last 50 are $k$. This is an interpolation between the sparse dataset synthetic dataset Level0 and  Level $k$. For our benchmark, we use \underline{Step16} (i.e., $k=16$) as a dataset of medium difficulty and \underline{Step50} as an easy dataset.
\item\textbf{SplitStairs}. The \underline{SplitStairs} dataset is an interpolation between Stair and a very sparse dataset. The first half looks like the Stair dataset but cells 50 until the end all have value 0. This ensures that all true cell counts that can be dominated by 50 random zero-mean Laplace random variables are represented in the dataset. 
\end{itemize}

Combined, these synthetic datasets give 8 1-dimensional histograms (4 Level, 1 Stair, 2 Step, 1 SplitStairs) and 8 2-dimensional histograms.

\clearpage
\begin{center}
\textbf{\Large{Complete experimental results.}}
\end{center}

Here, we present our full experimental results. The datasets used are the PUMS datasets (2-dimensional), the 1-dimensional synthetic data, and the 2-dimensional synthetic data. These datasets are described in the appendix of the full version of the paper, which appears in the supplementary material file.

For the one-dimensional datasets, we use either the Laplace mechanism (for pure differential privacy) or the Gaussian mechanism (for zCDP) to obtain noisy answers to:
\begin{itemize}
\item The sum query (the sum of the histogram cells)
\item The  identity queries (the count in each cell).
\end{itemize}

For the two-dimensional datasets, we use either the Laplace or Gaussian mechanisms to obtain noisy answers to:
\begin{itemize}
\item The sum query (the sum of the histogram cells)
\item The  identity queries (the count in each cell).
\item The marginal on the first dimension.
\item The marginal on the second dimension.
\end{itemize}

We use the NNLS (referred to as nnlsalg in the tables), Max fitting, Sequential Fitting, and Weighted Fitting (with confidence parameter $0.99$) postprocessing methods to obtain the privacy preserving positively weighted data $\widetilde{D}$. Sequential Fitting prioritizes queries in the order listed above.
We also use OLS fitting (NNLS fitting without the nonnegativity constraints), which is referred to as olsalg in the experiments. The OLS fitting method is known to improve the squared error of the queries compared to the original noisy answers (this is a consequence of the Gauss-Markov theorem) but does not result in a positively weighted dataset. Hence the goal of the methods is not to do much worse than the OLS fitting method.

The code was written in Julia. In order to make the code fully open source, we experimented with several open source solvers compatible with Julia's JuMP framework. Out of these, the COSMO solver performed the best. However, the relatively complex multi-stage optimizations in  Max fitting and Sequential fitting caused problems. In some cases the solver claimed infeasibility for problems in latter stages of the optimization (likely due to poor quality solutions in earlier stages), numerical errors, or slow convergence (hitting the iteration limit). To reduce the chance of poor solutions in earlier stages of an optimization, we set the absolute and relative tolerances to 1e-7 and an iteration limit of 20,000, which is 4 times the default. We also converted equality constraints of the form $x=constant$ to $x\leq constant + 0.001$ and $x \geq constant - 0.001$. For the Max Fitting solve, after it gets an $L_\infty$ distance estimate in the first stage of the solve, we added a slack of $0.01$ to this distance to prevent it from failing in the second stage.

Despite tuning parameter and setting slack tolerances to equalities and inequalities, not all runs were successful, so we only kept the ones where all stages of the optimization were optimal. This likely optimistically biased the results of Max fitting and Sequential fitting and increased their estimated standard errors.

These optimization problems did not affect OLS, NNLS, or the Weighted Fitting approaches.

Each experiment is an average over 1000 runs (thus the expected error of a query is estimated the average of its errors across 1000 runs). However, for more complex constrained methods, the average was among fewer runs if some stage of the multi-stage optimization failed to find an optimal solution.

In each table, we evaluate the error of different queries. 
\begin{itemize}
\item For the Sum query (as in Table \ref{table:experiment1:Sum:sq}), we display its expected error  along with estimated standard deviation.
\item For the Identity queries (as in Table \ref{table:experiment1:Id:sq}), each cell $i$ in the histogram corresponds to a query $q_i$ (the count in that cell).  For each cell $i$, we estimate its expected squared error $e_i=E[((q_i\data) - q_i(\widetilde{\data}))^2]$ by averaging the error across trials. Then we report $\max_i e_i$ and $\sum_i e_i$ along with standard errors. Again, we emphasize that our Max metric is $\max_i E[((q_i\data) - q_i(\widetilde{\data}))^2]$ and not outlier error $E[\max_i(((q_i\data) - q_i(\widetilde{\data})))^2]$.
\item For the two dimensional datasets, we also have tables for each marginal and report the max and total squared errors as for the identity queries.
\end{itemize}

Note that the goal is to avoid extreme errors that are much larger than the OLS error.

The experiments are organized first by privacy definition (pure DP and zCDP). Within each privacy definition, we first present results for the $1$-dimensional synthetic data (for 3 privacy parameters) followed by the $2$-dimensional synthetic data (for 3 privacy parameters) followed by the PUMS data (for 3 privacy parameters).

\section{Pure Differential Privacy}

\input{allexp/experiment1-Id-basic.tex}
\input{allexp/experiment1-Sum.tex}
\input{allexp/experiment2-Id-basic.tex}
\input{allexp/experiment2-Sum.tex}
\input{allexp/experiment3-Id-basic.tex}
\input{allexp/experiment3-Sum.tex}
\input{allexp/experiment4-Marg1-basic.tex}
\input{allexp/experiment4-Marg2-basic.tex}
\input{allexp/experiment4-Id-basic.tex}
\input{allexp/experiment4-Sum.tex}
\input{allexp/experiment5-Marg1-basic.tex}
\input{allexp/experiment5-Marg2-basic.tex}
\input{allexp/experiment5-Id-basic.tex}
\input{allexp/experiment5-Sum.tex}
\input{allexp/experiment6-Marg1-basic.tex}
\input{allexp/experiment6-Marg2-basic.tex}
\input{allexp/experiment6-Id-basic.tex}
\input{allexp/experiment6-Sum.tex}
\input{allexp/experiment7-Marg1-basic.tex}
\input{allexp/experiment7-Marg2-basic.tex}
\input{allexp/experiment7-Id-basic.tex}
\input{allexp/experiment7-Sum.tex}
\input{allexp/experiment8-Marg1-basic.tex}
\input{allexp/experiment8-Marg2-basic.tex}
\input{allexp/experiment8-Id-basic.tex}
\input{allexp/experiment8-Sum.tex}
\input{allexp/experiment9-Marg1-basic.tex}
\input{allexp/experiment9-Marg2-basic.tex}
\input{allexp/experiment9-Id-basic.tex}
\input{allexp/experiment9-Sum.tex}

\clearpage
\section{zCDP Experiments}

\input{allexp/experiment10-Id-basic.tex}
\input{allexp/experiment10-Sum.tex}
\input{allexp/experiment11-Id-basic.tex}
\input{allexp/experiment11-Sum.tex}
\input{allexp/experiment12-Id-basic.tex}
\input{allexp/experiment12-Sum.tex}
\input{allexp/experiment13-Marg1-basic.tex}
\input{allexp/experiment13-Marg2-basic.tex}
\input{allexp/experiment13-Id-basic.tex}
\input{allexp/experiment13-Sum.tex}
\input{allexp/experiment14-Marg1-basic.tex}
\input{allexp/experiment14-Marg2-basic.tex}
\input{allexp/experiment14-Id-basic.tex}
\input{allexp/experiment14-Sum.tex}
\input{allexp/experiment15-Marg1-basic.tex}
\input{allexp/experiment15-Marg2-basic.tex}
\input{allexp/experiment15-Id-basic.tex}
\input{allexp/experiment15-Sum.tex}
\input{allexp/experiment16-Marg1-basic.tex}
\input{allexp/experiment16-Marg2-basic.tex}
\input{allexp/experiment16-Id-basic.tex}
\input{allexp/experiment16-Sum.tex}
\input{allexp/experiment17-Marg1-basic.tex}
\input{allexp/experiment17-Marg2-basic.tex}
\input{allexp/experiment17-Id-basic.tex}
\input{allexp/experiment17-Sum.tex}
\input{allexp/experiment18-Marg1-basic.tex}
\input{allexp/experiment18-Marg2-basic.tex}
\input{allexp/experiment18-Id-basic.tex}
\input{allexp/experiment18-Sum.tex}


\end{document}